\documentclass[nonacm,sigconf]{acmart}

\usepackage{float}
\usepackage{algorithm}
\usepackage{algpseudocode}
\usepackage{xspace}

\floatname{algorithm}{Procedure}

\newfloat{ruleset}{htbp}{loa}
\floatname{ruleset}{Ruleset}
% \makeatletter\newcommand\l@ruleset{\@dottedtocline{1}{1.5em}{2.3em}}\makeatother

\newfloat{ordering}{htbp}{loa}
\floatname{ordering}{Ordering}
% \makeatletter\newcommand\l@ordering{\@dottedtocline{1}{1.5em}{2.3em}}\makeatother

% \newcommand{\note}[1]{\textbf{\textcolor{red}{#1}}}

\newcommand{\now}{\lq now\rq\xspace}

\newtheorem{definition}{Definition}
\newtheorem{problem}{Problem Statement}
\newtheorem{assumption}{Assumption}
\newtheorem{lemma}{Lemma}

\newtheorem{proposition}{Proposition}
\newtheorem{corollary}{Corollary}

\setlength{\abovedisplayskip}{-15pt}
\setlength{\belowdisplayskip}{0pt}
\setlength{\abovedisplayshortskip}{0pt}
\setlength{\belowdisplayshortskip}{0pt}

\AtBeginDocument{%
  \providecommand\BibTeX{{%
    \normalfont B\kern-0.5em{\scshape i\kern-0.25em b}\kern-0.8em\TeX}}}

%% Rights management information.  This information is sent to you
%% when you complete the rights form.  These commands have SAMPLE
%% values in them; it is your responsibility as an author to replace
%% the commands and values with those provided to you when you
%% complete the rights form.
\setcopyright{acmcopyright}
\copyrightyear{2021}
\acmYear{2021}
\acmDOI{.................}

%% These commands are for a PROCEEDINGS abstract or paper.
% \acmConference[SIGSIM-PADS '21]{Proceedings of the SIGSIM Principles of Advanced Discrete Simulation}{May 31--June 2, 2021}{Suffolk, VA, USA}
% \acmBooktitle{Proceedings of the SIGSIM Principles of Advanced Discrete Simulation (SIGSIM-PADS '21), May 31--June 2, 2021, Suffolk, VA, USA}
% \acmPrice{15.00}
% \acmISBN{....................}

\begin{document}
\setlength{\floatsep}{5pt plus 2pt minus 2pt}
\setlength{\textfloatsep}{5pt plus 2pt minus 2pt}

\title{Unbiased Deterministic Total Ordering of Parallel Simulations with Simultaneous Events}

\author{Neil McGlohon}
\affiliation{%
  \institution{Rensselaer Polytechnic Institute}
  \streetaddress{110 8th Street}
  \city{Troy}
  \state{New York}
  \country{USA}
  \postcode{12180}
}
\email{mcglon2@rpi.edu}

\author{Christopher D. Carothers}
\affiliation{%
  \institution{Rensselaer Polytechnic Institute}
  \streetaddress{110 8th Street}
  \city{Troy}
  \state{New York}
  \country{USA}
  \postcode{12180}
}
\email{chris.carothers@gmail.com}

\begin{abstract}
  In the area of discrete event simulation (DES), event simultaneity occurs when any two events are scheduled to happen at the same point in simulated time. Simulation determinism is the expectation that the same semantically configured simulation will be guaranteed to repeatedly reproduce identical results. Since events in DES are the sole mechanism for state change, ensuring consistent real-time event processing order is crucial to maintaining determinism. This is synonymous with finding a consistent total ordering of events.
  
  % When working with parallel discrete event simulation (PDES), however, maintaining a consistently identical final ledger of event ordering is not a simple task. Virtual time, and the implementation of it known as Time Warp, can guarantee a distributed partial order mapping to a deterministic total ordering of events given that all events are comparable. When two events occur at the same coordinate in virtual space-time, a tie, they are no longer comparable with virtual time alone and determinism may be lost. It is possible to regain determinism by establishing a consistent set of rules that determines which event happens-before another. This, unfortunately, introduces inherent bias in the ordering of two otherwise incomparable events. As a result, it may become difficult to know for sure if the observed simulation behavior is a good example of expected behavior or a statistical anomaly caused by said bias.
  
  In this work, we extend the concept of virtual time to utilize an arbitrary-length series of tie-breaking values to preserve determinism in parallel, optimistically executed simulations without imposing additional bias influencing the ordering of otherwise incomparable events. Furthermore, by changing the core pseudo-random number generator seed at initialization, different orderings of events incomparable by standard virtual time can be observed, allowing for fair probing of other potential simulation outcomes. We implement and evaluate this extended definition of virtual time in the Rensselaer Optimistic Simulation System (ROSS) with three simulation models and discuss the importance of deterministic event ordering given the existence of event ties.
\end{abstract}

\begin{CCSXML}
  <ccs2012>
     <concept>
         <concept_id>10010147.10010341.10010349.10010354</concept_id>
         <concept_desc>Computing methodologies~Discrete-event simulation</concept_desc>
         <concept_significance>500</concept_significance>
     </concept>
     <concept>
         <concept_id>10010147.10010341.10010370</concept_id>
         <concept_desc>Computing methodologies~Simulation evaluation</concept_desc>
         <concept_significance>500</concept_significance>
     </concept>
     <concept>
         <concept_id>10002950.10003648.10003671</concept_id>
         <concept_desc>Mathematics of computing~Probabilistic algorithms</concept_desc>
         <concept_significance>300</concept_significance>
     </concept>
   </ccs2012>
\end{CCSXML}
  
  \ccsdesc[500]{Computing methodologies~Discrete-event simulation}
  \ccsdesc[300]{Computing methodologies~Simulation evaluation}
  \ccsdesc[100]{Mathematics of computing~Probabilistic algorithms}

%%
%% Keywords. The author(s) should pick words that accurately describe
%% the work being presented. Separate the keywords with commas.
\keywords{Parallel Discrete Event Simulation, Determinism, Event Simultaneity, Order Theory}

\maketitle

\section{Introduction}\label{sec:introduction}
Discrete event simulation (DES) is an effective method for simulating a wide variety of phenomena. It quantifies any occurrence in the simulation as being an event that occurs on a simulation entity. This event alters the state according to some encoded context and occurs at a temporal coordinate in the simulation defined by an offset from its current definition of time. Because simulation state only changes when an event happens, the outcome of a simulation is entirely dependent on the initial state and the set of ordered events that alters it.

It is expected that two identically configured simulations will produce identical results. Simulations with this property are referred to as deterministic and it is a common expectation for models developed in DES. 
Maintaining determinism in a simulation can help confirm that there is not additional errant, undefined behavior occurring. The defined rules of a deterministic simulation, given a singular input, result in a singular answer.

From a high-level view, maintaining determinism appears to be a simple task: process events in the order in which they are scheduled to occur. However, there are complications that can add complexity to this objective. In a parallel processing environment, a simulation is partitioned across multiple processors, with events occurring between entities that may not exist on the same processor. Because of this, the nature of inter-processor clock drift and individual processing delays, ensuring a deterministic ordering of event execution requires additional synchronization overhead.

Fortunately, this is generally a solved problem in Parallel Discrete Event Simulation (PDES) with various synchronization methods for conservative parallel and optimistic parallel execution. One complication, however, is event simultaneity: events that occur at the same coordinate in virtual time. Given two simultaneous events that each result in different, non-commutative, state changes on the occurring entity, final simulation results will naturally depend on the ordering of these two events. 

% In a standard single-processor (and single threaded), or sequential, execution, ensuring determinism is fairly trivial. Process events in the exact order that they are scheduled, in virtual time, to occur in the simulation and if there are simultaneous events, then process them in the order in which they were created. 
In a parallel environment, an inter-processor event may arrive at the receiving processor after an earlier timestamped event had been processed, violating the virtual time-guaranteed happens-before relation of the two events. Correction methods like reverse-computation or checkpoints can be used to revert the simulation to the last known safe state and ensure that a consistent and valid causal ordering is maintained despite this phenomenon. However, there could be a problem when two events are to occur at the same coordinates in virtual space-time. Given the encoded standard virtual timestamps alone, there is not a clear way to define which event happens-before the other which results in a loss of determinism.

To regain assured determinism, a set of rules can be defined dictating the expected ordering of two simultaneous events. Given a strictly increasing event counter on each processor that increments with every new event, then in the case of simultaneous events, we can give priority based on the sending processor ID. If that, too, is identical, then we can give priority based on the event ID encoded from the event counter. Determinism was regained but with one major caveat: the resulting ordering is influenced by the bias of the chosen of rule-set.

Because of this bias imparted by the ruleset, it is difficult to know for certain if the resulting simulation behavior is exemplary of the simulated model's behavior as a whole or a special case determined by said ruleset's influence on event ordering. In this work we extend the definition of virtual time to include a set of tie-breaking values and devise a mechanism for unbiased arbitration of simultaneous events in a parallel discrete event simulator with optimistic execution. This is accomplished by utilizing a rollback-safe tie-breaking uniform random number generator. We develop this mechanism into the Rensselaer Optimistic Simulation System (ROSS) and exhibit its effects on simulation performance and model behavior. Because ROSS, even in optimistic execution, utilizes fully deterministic pseudo-random number generators (PRNGs), different RNG initialization seed values will yield different but deterministic event orderings, even in the presence of simultaneous events, allowing for deeper statistical analysis of simulated models.

In this work we propose and argue three new mechanisms for finding a deterministic total ordering of events in a parallel discrete event simulation with simultaneous events:

\begin{enumerate}
  \item Unbiased random total ordering of otherwise incomparable simultaneous events.
  \item Biased random total ordering of otherwise incomparable simultaneous events, including events created with zero-delay/offset.
  \item Unbiased random total ordering of otherwise incomparable simultaneous events, including events created with zero-delay/offset.
\end{enumerate}

% \section{Parallel Discrete Event Simulation Background}\label{sec:background}
% Parallel Discrete Event Simulation is a focus of simulation that differs from traditional, locked time-step, simulation in that computation is only performed when something \emph{happens} in the simulation. Events, or interactions between two simulated entities, are the sole driver for state change and thus computational resources are only needed when events occur, regardless of the timescale in between. 

% One such parallel synchronization modes is \emph{conservative} execution. In this mode, there is a lookahead window employed that prevents the possibility of processing events out-of-order in virtual time to ensure determinism. This processing restriction overhead, however, can limit the performance benefits that would ordinarily come with parallel execution.

\section{Problem Definition}
For this work, we specifically utilize the Rensselaer Optimistic Simulation System (ROSS) PDES engine~\cite{carothers2000,carothers2002} which implements the Time Warp protocol~\cite{jefferson1985fast,jefferson1987,jefferson1987time} in conjunction with virtual time~\cite{jefferson1985}. In a ROSS simulation, entities or agents are represented as Logical Processes (LPs). These LPs are mapped to the various Processing Elements (PEs) that may exist. Typically, there is one PE per physical MPI process that is participating in the actual execution of the simulation.

ROSS is capable of being run in three main modes of synchronization. When the simulation is to be executed on a single process, it is executed in \emph{sequential} mode. Maintaining determinism in this mode is trivial. The challenge of determinism comes when additional PEs are added to the simulation. With additional PEs comes the advent of \emph{remote} events, those whose destination is on a different PE than its source. 
% In a parallel paradigm, one cannot assume a consistent ordering of these received inter-processor messages between executions.

% We assume that any implementation of our proposed solution will also need to implement the Time Warp protocol\note{cite} for defining virtual time. There are also a number of assumptions that this provides. We will start by stating a number of those which have implications for defining a total ordering of events in a parallel simulation. We'll follow that with a few other assumptions necessary to exist for any implementation of our solution. 

All propositions in this work are based on the following assumptions and definitions:

% To establish the problem that our proposed solution solves we must start off with some assumptions and their implications. These assumptions hold true for ROSS but any PDES simulator that is capable of ensuring these will also be capable of utilizing our proposed solution. Unless explicitly stated otherwise, any propositions assume the following:

\begin{assumption}\label{as:inter-ordering}
  In a parallel environment, arrival order of inter-processor messages is not guaranteed to be consistent across executions of a simulation.
\end{assumption}

This is due to natural differences of local clock-speed. Even the smallest perturbation can lead to enough clock drift to change the order in which any two events from different processors may arrive at a third processor. If this is not accounted for, it can lead to drastically different final results. Fortunately, this is addressed by Time Warp's virtual time approach to parallel event processing.

\begin{assumption}\label{as:timestamp}
  Every event in the simulation is encoded with a standard virtual timestamp specifying when in simulation time the event occurs.
\end{assumption}

Without virtual time, a receiving process will have no ability to determine when in the future the event should occur. Given this encoded virtual time, it is possible for the receiving process to deduce a partial ordering of this received event and others that the process is aware of.

\begin{assumption}\label{as:pe-queues}
  Each processing element in a PDES system has a priority queue maintaining a proper order of arrived events to be processed.
\end{assumption}

Because different LPs across PEs may be scheduling events at various times in the future, we cannot assume that events will arrive in the exact order that they are to be executed -- and by Assumption~\ref{as:inter-ordering}, can not be guaranteed even if one tried to make it so. Thus, each PE should manage a queue maintaining a proper order in which events should be processed. This is almost certain to exist in any parallel simulator that implements virtual time.

% In a conservative execution mode, there is a lookahead window that allows PEs to know what events are safe to be processed but there is still no guarantee that future events will arrive in the same order every time.

% In a nutshell, a PE will not allow any of its LPs to process an event at time $t$ in the future if it cannot assert that it knows for certain that it will not receive an event at time $\epsilon < t$ in the future.By maintaining a consistently ordered queue of events to be processed, it will assure that when events are dequeued they are done so strictly in the correct order.

In an optimistically executed simulation events that arrive at an LP after its current definition of \now are referred to as \emph{straggler} events~\cite{fujimoto1990}. To address this, optimistic simulations can be rolled back. After rolling back the simulation to a known safe state, the undone events can be correctly ordered in the priority queue and re-processed. None of this can be accomplished without some ordered queue.

\begin{assumption}\label{as:no-past-events}
  No LP can create events at a virtual time in the past relative to its own definition of virtual time.
\end{assumption}

This is a safe assumption as the entirety of discrete event simulation relies on the basis of maintaining causal-order; causal ordering can be formally stated as:

% If an LP can create an time-travelling event in its relative past it would be impossible to maintain a logical causal ordering of simulation events. Causal ordering can be more formally stated as: 

\begin{definition}[from~\cite{jefferson1985}]\label{def:causal}
  Event $A$ causes $B$ ($A\rightarrow B$) if there exists any sequence of events $A=E_0,E_1,\ldots,E_n=B$ such that for each pair $E_i$ and $E_{i+1}$ of adjacent events either (a) $E_i$ and $E_{i+1}$ are both in the same process and the virtual time of $E_i < E_{i+1}$ or (b) event $E_i$ sends a message to be received at event $E_{i+1}$.
\end{definition}

% \begin{assumption}\label{as:no-zero}
%   zero-offset events, those scheduled to occur with zero virtual time offset from the event which caused its creation, are not allowed.
% \end{assumption}

% While this is generally allowed in PDES simulations, it is incompatible with our proposed work. We discuss the implications of relaxing this assumption in Section~\ref{sec:relaxing}.
% In order to maintain the unbiased property of our solution, zero-offset events are incompatible and will likely cause simulation livelock via infinitely repeating rollbacks. By sacrificing this unbiased property of our solution, zero-offset events and cycles can be allowed; more on this is discussed in Section~\ref{sec:relaxing}.

% \begin{definition}\label{def:comparable}
%   Given two events with encoded virtual timestamps, if a total ordering can be deduced then they are considered \emph{comparable} by their virtual timestamps. If a total ordering cannot be deduced then they are considered \emph{incomparable} by their virtual timestamps.
% \end{definition}

Because, in virtual time simulations, the virtual space-time coordinates of an event define the happens-before and causal relations of events, the only instance where two events are considered \emph{incomparable} is when they have identical virtual timestamps, i.e. simultaneous events or an \emph{event tie}. 

\begin{definition}\label{def:comparable}
  Two events are considered comparable if a consistent happens-before relation can be inferred between the two. If no happens-before relation can be inferred, then they are considered incomparable.
\end{definition}
% These are concurrent or simultaneous events and constitute an event tie. Different ordering of these concurrent events may have unintended consequences - especially if the associated processes access memory considered external to its local state.

\begin{definition}\label{def:zero-offset}
 An event created with zero virtual-time delay between its encoded timestamp and the creating LP's definition of \now (timestamp of the causal event) is referred to as a zero-offset event.
\end{definition}

\begin{assumption}\label{as:rng}
  Every LP has at least one, rollback-safe, pseudo-random number generator stream solely dedicated to generating uniform random values to encode into events that it creates.
\end{assumption}

To achieve the goal of creating an unbiased ordering of simultaneous events via a pseudo-random number generator, it is imperative that this stream be completely independent from other streams accessed in the model and deterministically rolled back. For reasons which will be explained in detail in Section~\ref{sec:random-determinism}, it is also critical that this PRNG can be deterministically rolled back to a previous state should the events it generated values for be cancelled or rolled back.

With the above assumptions and definitions, we posit the following problem:

\begin{problem}\label{prob:deterministic-ties}
  Given an optimistically or conservatively executed parallel discrete event simulation with the above assumptions, consisting of a partially ordered set of virtual-time encoded events $E$ which may or may not contain event ties, find a mechanism which ensures unbiased total ordering of $E$ such that any subsequent executions of the simulation are deterministic.
\end{problem}

\section{Event Simultaneity}\label{sec:simultaneity}
Another way to phrase Problem~\ref{prob:deterministic-ties} is: given a simulation of events, find a mechanism to randomly establish the partial ordering found by a parallel simulation in a way that is deterministic and identical to the total ordered version found by a sequential execution of the same simulation. If all events in a partially ordered simulation are comparable to each other, then the partially ordered events are also considered totally ordered.

% As mentioned in Section~\ref{sec:background}, deterministic ordering of remote event arrival is not automatically assured in a parallel environment. This is especially true in an optimistically executed simulation. 
% Straggler events are common occurrences in these types of simulations.\note{cite fujumoto}.

% However, even with strict synchronization maintaining that no out-of-order events be processed, a simulation which would otherwise be deterministic can still be non-deterministic if simultaneous events are allowed to occur. Simultaneous events, those which occur at the same point in virtual time, may not fall into the processing queue in the exact same order every execution -- especially if the two simultaneously occurring events are remote events.

As briefly mentioned in Section~\ref{sec:introduction}, deterministic ordering of simultaneous events can be achieved by establishing a ruleset that is enforced by each PE dictating which event happens-before another. If this ruleset is precise enough to render every possible event in the simulation comparable to every other event then a deterministic total ordering can be established from its partial counterpart. In Figure~\ref{fig:simultaneous}, events $A,B,C$ and $D$ all occur simultaneously in the simulation. How this ruleset is defined will specify how these events are ordered to occur in the simulation. The orderings of $C$ and $D$ are specifically of consequence due to them each operating on the same LP.

\subsection{Ordering Bias}
% We have established that any consistent and sufficiently specific ruleset is all that is necessary to regain determinism and establish a total ordering. 
This ruleset would, in effect, turn incomparable, simultaneous, events into comparable events that occur \emph{simultaneously yet infinitesimally before or after one another}. An example ruleset would be as follows, in decreasing order of importance, cascading down to break any subsequent value ties.

\begin{ruleset}
\caption{Ordering with Explicit Bias}
\flushleft 
Give priority to:
\begin{enumerate}
  \item Events with lower virtual timestamp
  \item Events with lower PE ID
  \item Events with lower LP ID
  \item Events with lower (per PE) event ID
\end{enumerate}
\end{ruleset}

This however will result in a single final result with likely few options that would allow users to explore alternate orderings. Perhaps with greater consequence is the amount of bias that this ruleset instills into the simulation. In the case of a timestamp tie, it will always give priority to whatever event came from a lower numbered processor. This means that LPs mapped to lower numbered processors, for instance, will always take precedence should they tie and that the same simulation but with different numbers of PEs or a different mapping will likely generate different final results.

Given the possibility of a two-way event tie, it is possible to relinquish the simulated model from the influences of our bias by instead relying on a fair coin flip.

\begin{figure}[t]
  \centering
  \includegraphics[width=.8\linewidth]{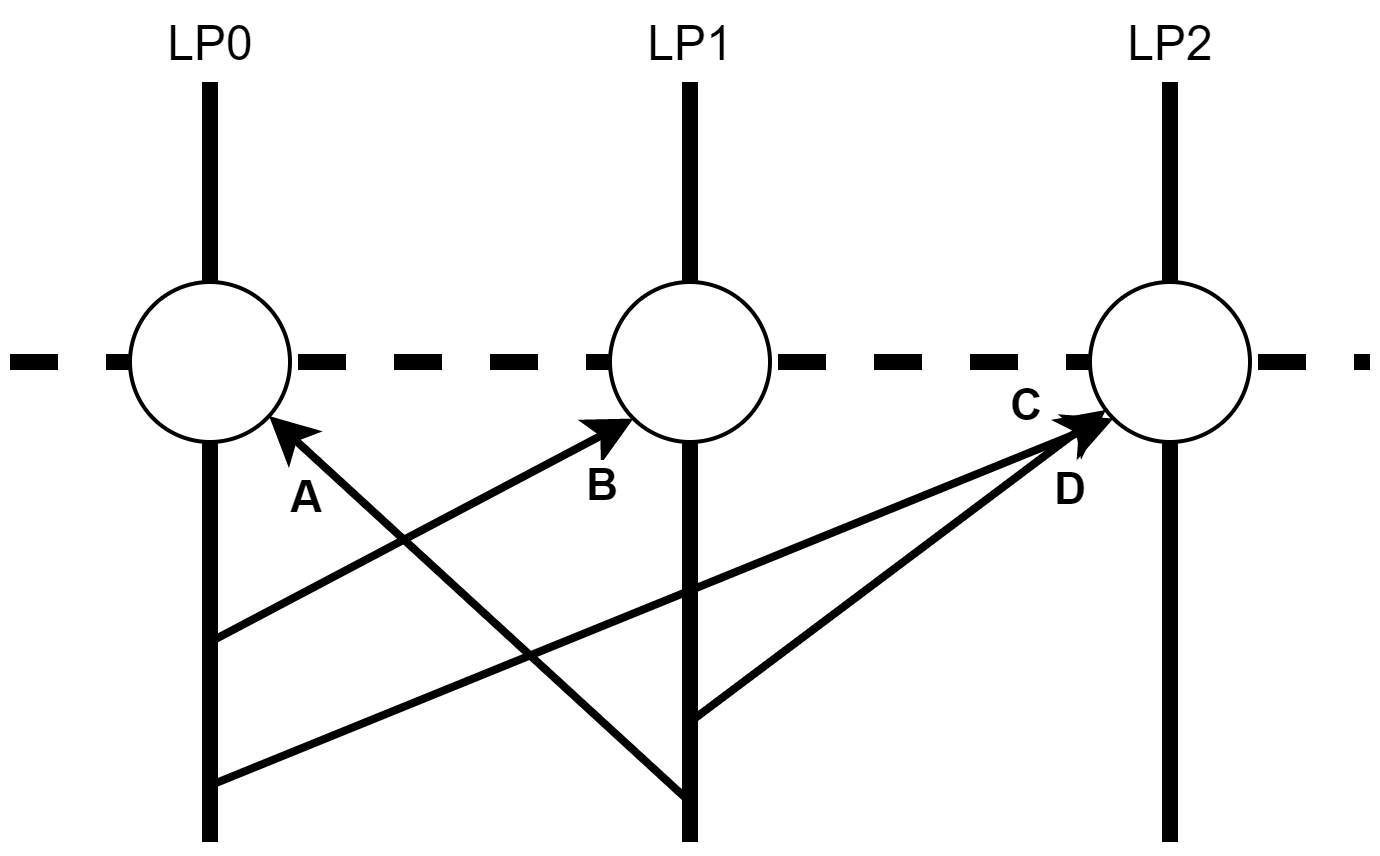}
  \caption{In this LP-Event diagram, all events: $A,B,C,$ and $D$ occur at each of their respective LPs simultaneously. Determining how these events, $C$ and $D$ in particular, should be ordered in a final total ordering may have an effect on final simulation results.}\label{fig:simultaneous}
\end{figure}

\begin{ruleset}
  \caption{Ordering without Bias given 2-way event ties}
\flushleft
Give priority to:
\begin{enumerate}
  \item Events with lower virtual timestamp
  \item Events that win a coin flip
\end{enumerate}
\end{ruleset}

With this rulset, given two tied events, A and B, there is a 50\% chance of being ordered as [A,B] and a 50\% chance of being ordered as [B,A]. If we can randomly generate the values necessary for this coin flip to occur, then we can, without bias, order all events in a simulation with at most 2-way event ties.

This can easily be extended to $n$-way ties. Instead of a coin-flip, encode an i.i.d. uniform random (UR) value into an event when it is created and implement the following ruleset.

\begin{ruleset}
  \caption{Ordering without Bias given $n$-way event ties} \label{rule:n-way}
\flushleft
Give priority to:
\begin{enumerate}
  \item Events with lower virtual timestamp
  \item Events with lower tie-breaking value
\end{enumerate}
\end{ruleset}

Given an $n$-way tie, each event having its own independently generated uniform random value, any specific ordering of these events constitutes a single possibility (occurring with probability $1/n!$) out of all possible permutations of these tied events. Furthermore, the probability of any two tied events being ordered a specific way with respect to each other is 50\%. This method is equivalent to what is described and proven in~\cite{cormen2001} as \texttt{PERMUTE-BY-SORTING} yielding the following lemma:

\begin{lemma}[From~\cite{cormen2001}]\label{lem:permutation}
  Given a list of length $N$ with $N$ distinct i.i.d. uniform random values, a uniform random permutation can be found by assigning one value to each item in the list and sorting according to these values.
\end{lemma}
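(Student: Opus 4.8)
The plan is to show directly that each of the $N!$ possible permutations of the list is produced by the sorting step with probability exactly $1/N!$, which is precisely what it means for the output to be a uniform random permutation. The entire argument rests on a single structural fact: a family of i.i.d.\ random variables is exchangeable, i.e.\ the joint distribution of $(P_1,\dots,P_N)$ is invariant under any relabeling of the indices.

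First I would fix an arbitrary target permutation $\pi$ of $\{1,\dots,N\}$ and observe that sorting the items by their assigned values outputs $\pi$ exactly on the event $\{P_{\pi(1)} < P_{\pi(2)} < \dots < P_{\pi(N)}\}$; here the hypothesis that the $N$ values are distinct is what lets us work only with strict inequalities and ignore ties. Next I would argue that for any fixed sequence of distinct indices $i_1,\dots,i_N$ the probability of $\{P_{i_1} < \dots < P_{i_N}\}$ is the same regardless of that sequence: applying the permutation sending $i_k \mapsto k$ and invoking exchangeability gives $\Pr[P_{i_1} < \dots < P_{i_N}] = \Pr[P_1 < \dots < P_N]$. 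Finally, the $N!$ order events indexed by the permutations are pairwise disjoint and, by distinctness, their union has probability $1$, so each has probability $1/N!$; hence $\Pr[\text{sort outputs } \pi] = 1/N!$ for every $\pi$, as required.

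The only delicate point — and it is essentially handed to us by the lemma's hypothesis — is the claim that distinctness lets us discard tie events and that the $N!$ strict-order events partition the probability space up to a null set; for a continuous uniform distribution this is automatic since $\Pr[P_i = P_j] = 0$, so no additional work is needed. As a cross-check I would also sketch the inductive alternative: by symmetry the item receiving the smallest value is uniformly distributed over the $N$ items, and conditioned on which item that is, the remaining $N-1$ values are still i.i.d.\ uniform, so their sorted order is a uniform random permutation of the remaining items by the inductive hypothesis, yielding $\tfrac1N \cdot \tfrac1{(N-1)!} = \tfrac1{N!}$.
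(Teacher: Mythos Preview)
Your proof is correct. Note, however, that the paper does not actually prove this lemma: it is stated with the attribution ``From~\cite{cormen2001}'' and invoked without further argument, so there is no in-paper proof to compare against.

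For completeness, the argument in Cormen et al.\ (the \texttt{PERMUTE-BY-SORTING} lemma) is organized slightly differently from yours: it fixes a target permutation, writes the event that the sort produces it as an intersection of events $E_i$ (``the $i$-th item receives the $i$-th smallest priority''), and computes the probability via the chain rule $\Pr[E_1]\Pr[E_2\mid E_1]\cdots = \tfrac{1}{n}\cdot\tfrac{1}{n-1}\cdots\tfrac{1}{1}=\tfrac{1}{n!}$. Your exchangeability argument reaches the same conclusion more directly by observing that all $N!$ strict-order events have equal probability and partition the space; the inductive cross-check you sketch is essentially the CLRS computation phrased recursively. Either route is fine here.
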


If the probability of tie-breaker collision is negligible, then it is possible to generate a uniform, or unbiased, random permutation by sorting based on the uniform randomly generated key. It follows that by using Ruleset~\ref{rule:n-way}, we can extract an unbiased random ordering of an $n$-way event tie.

One may question whether any random ordering of these events is safe. The most important thing in ordering simultaneous events is to uphold event causality: no event caused by another can happen before the other event.

\begin{lemma}\label{lem:safe-causality}
  If there are no zero-offset events in a simulation, then no events participating in a tie caused any other events in said tie.
\end{lemma}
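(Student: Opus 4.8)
The plan is to argue by contradiction, exploiting the fact that forbidding zero-offset events upgrades the virtual-time relationship across a message-passing causal link from ``non-decreasing'' to ``strictly increasing.'' Write $\mathrm{vt}(E)$ for the virtual timestamp of an event $E$. Suppose, toward a contradiction, that two distinct events $A$ and $B$ participate in the same tie --- so $\mathrm{vt}(A) = \mathrm{vt}(B)$ --- and yet $A \rightarrow B$. By Definition~\ref{def:causal} there is a sequence $A = E_0, E_1, \ldots, E_n = B$ with $n \geq 1$ in which each adjacent pair $(E_i, E_{i+1})$ is linked either by (a) co-residence in a single process with $\mathrm{vt}(E_i) < \mathrm{vt}(E_{i+1})$, or (b) a message sent at $E_i$ and received at $E_{i+1}$.

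Next I would show $\mathrm{vt}(E_i) < \mathrm{vt}(E_{i+1})$ holds for \emph{every} $i$, regardless of link type. For links of type (a) this is immediate from the definition. For links of type (b), $E_{i+1}$ is an event scheduled by $E_i$, so its encoded timestamp equals $E_i$'s timestamp (the creating LP's definition of \now) plus the scheduling offset; Assumption~\ref{as:no-past-events} forces that offset to be non-negative, and the standing hypothesis that the simulation contains no zero-offset events (Definition~\ref{def:zero-offset}) rules out an offset of exactly zero, leaving a strictly positive offset and hence $\mathrm{vt}(E_i) < \mathrm{vt}(E_{i+1})$. Chaining these strict inequalities over the non-empty sequence yields $\mathrm{vt}(A) = \mathrm{vt}(E_0) < \mathrm{vt}(E_1) < \cdots < \mathrm{vt}(E_n) = \mathrm{vt}(B)$, contradicting $\mathrm{vt}(A) = \mathrm{vt}(B)$. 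Therefore no event participating in a tie can cause another event in that tie.

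The step I expect to require the most care is the type-(b) link: making precise that ``a message sent at $E_i$ is received at $E_{i+1}$'' means $E_{i+1}$ is an event created by $E_i$ whose timestamp is $E_i$'s timestamp offset by its (positive) creation delay, so that Definition~\ref{def:zero-offset} and Assumption~\ref{as:no-past-events} genuinely apply to it. I would also dispose of the boundary case up front: a tie is a relation among \emph{distinct} events, so $n = 0$ (i.e.\ $A = B$) does not occur, which is exactly what guarantees the chain of inequalities is non-empty and thus strict.
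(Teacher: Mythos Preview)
Your proof is correct and follows essentially the same contrapositive/contradiction route as the paper: assume a causal link between tied events and derive that some zero-offset event must exist. Your version is more careful than the paper's in that you explicitly unroll Definition~\ref{def:causal} into a chain and verify strict virtual-time increase across every link, whereas the paper treats the causal relation as if it were a single direct scheduling step; this extra rigor is warranted but does not change the underlying idea.
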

\begin{proof}
  Given the contrapositive: if events participating in the tie did cause other events in the tie, then there are zero-offset events. Suppose an event participating in the tie caused another event in the tie. Because both events are participating in the tie, the virtual time difference between the two would be zero. By definition, the caused event was scheduled with zero-offset.
  Thus, by contrapositive proof, if zero-offset events are not allowed, then no events in an event tie caused any other events in the same tie.
\end{proof}

A consequence of this lemma is that if there are no zero-offset events in a simulation, then because there is no causal linking between any events participating in the tie, no two events in the tie are ordinarily comparable. If they were comparable by their virtual timestamps, then they would not be in this tie in the first place.

Because events that are not causally related to each other and incomparable could be processed in any order without violating causality, then a random permutation of these events constitutes a valid ordering.

% \begin{proposition}
%   Given no zero-offset events, then by Lemma~\ref{lem:safe-causality}, no events in an event tie have causal relationships with any others in the same tie. Therefore, any possible permutation of these events is safe from a causality standpoint and thus by Lemma~\ref{lem:permutation}, then an implementation of Ruleset~\ref{rule:n-way} will yield an unbiased ordering of tied events.
% \end{proposition}

\begin{proposition}\label{prop:no-zero-unbiased}
  Providing, in addition to the standard virtual timestamp, a secondary uniform random value and comparing tied events based on that will yield an unbiased ordering of events without violating causality given no zero-offset events.
\end{proposition}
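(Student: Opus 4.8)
The plan is to split the statement into its two assertions --- that the induced total ordering is \emph{causally valid} and that it is \emph{unbiased} --- and to settle each using the results already in hand (Lemmas~\ref{lem:permutation} and~\ref{lem:safe-causality}, together with Ruleset~\ref{rule:n-way}).

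For causality, I would first observe that the secondary key only ever matters among events sharing a virtual timestamp. By Assumption~\ref{as:no-past-events} and Definition~\ref{def:causal}, any causal chain $A \to B$ forces the virtual timestamp of $A$ to be at most that of $B$; when it is strictly smaller the primary key already orders $A$ before $B$, untouched by whatever secondary values are attached. The only delicate case is a causal pair with equal timestamps, i.e.\ two members of the same tie --- and Lemma~\ref{lem:safe-causality} excludes exactly this, since with no zero-offset events no member of a tie caused another member of that tie. Hence within any tie there is no happens-before relation to preserve (the members are pairwise incomparable in the sense of Definition~\ref{def:comparable}), so \emph{every} linearization of the tie is causally valid, in particular the one produced by sorting on the secondary values.

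For unbiasedness, I would fix a tie of size $n$. By Assumption~\ref{as:rng} each of its events carries a uniform random value drawn from its creating LP's dedicated, independent stream, so the $n$ tie-breaking values are i.i.d.\ uniform and independent of the model's causal structure. Lemma~\ref{lem:permutation} then applies verbatim: sorting the $n$ events by these keys yields a uniform random permutation of the tie, so each of the $n!$ orderings occurs with probability $1/n!$ and any two tied events are ordered either way with probability $1/2$. Since distinct timestamps are resolved deterministically by virtual time alone --- admitting no preference to bias --- this exhausts every place where ordering choices are made, and no extraneous bias is introduced.

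The main obstacle is the distinctness hypothesis buried in Lemma~\ref{lem:permutation}: \texttt{PERMUTE-BY-SORTING} produces a uniform permutation only when the $n$ keys are pairwise distinct. For ideal continuous uniform values this fails on a probability-zero event, but a finite-precision PRNG collides with small positive probability, so the conclusion must be stated under the explicit proviso that tie-breaker collisions are negligible --- or, equivalently, the comparison must be allowed to cascade to a further independent value on collision, which is precisely the arbitrary-length tie-breaking construction developed later in the paper. I would therefore present the proposition with the ``negligible collision'' caveat made explicit and remark that the cascading extension discharges it entirely.
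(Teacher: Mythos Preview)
Your proposal is correct and follows essentially the same approach as the paper: invoke Lemma~\ref{lem:safe-causality} to conclude that no two tied events are causally related (so any permutation of a tie is causally safe), and then apply Lemma~\ref{lem:permutation} via Ruleset~\ref{rule:n-way} to conclude the induced ordering is unbiased. Your write-up is more explicit than the paper's --- you separately treat the cross-timestamp case and you surface the distinctness caveat that the paper defers to the paragraph following the proof --- but the skeleton and the key lemmas are identical.
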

\begin{proof}
  Lemma~\ref{lem:safe-causality} assures that no events in any event tie will have a causal relationship with any others in the same tie. Because these events, given their standard virtual timestamp, are thus incomparable, any possible permutation of these events is causally safe and by Lemma~\ref{lem:permutation}, an implementation of Ruleset~\ref{rule:n-way} will allow these events to be comparable and yield an unbiased ordering of events.
\end{proof}

For clarification, we define the combination of a virtual timestamp and any secondary values determining event priority as a \emph{virtual time signature}. Events created in accordance to Proposition~\ref{prop:no-zero-unbiased} that are incomparable by their virtual timestamps are comparable by their virtual time signature.

It is important to note, however, that Lemma~\ref{lem:permutation} relies on \emph{distinct} values from its uniform random number generator. By the nature of PRNGs, this is not possible. For the purposes of our work, though, we utilize sufficiently precise PRNGs with very large periods and thus we assume the probability of generating non-distinct values is negligible. 

Should this probability still be considered too high, additional tie-breaking values can be generated and encoded via supplementary, independent, PRNG streams. By applying \texttt{PERMUTE-BY-SORTING} recursively to sub-lists containing PRNG value collisions. This effectively increases the bit precision of the primary tie-breaking value, making overall collision even less likely while maintaining the expected lack of bias.

\subsection{Determinism with Randomness} \label{sec:random-determinism}
We have now established that given a good choice of ruleset, we can create \emph{an} unbiased ordering of events in a simulation. But this alone does not grant determinism. The ROSS simulator guarantees Assumption~\ref{as:rng}, allocating an independent PRNG stream on every LP with the sole duty of generating the tiebreaker values encoded into each event that that LP creates.

One valuable property of PRNGs is that given a specific seed, the sequence of values generated from it is deterministic. In a sequentially or conservative parallel executed simulation, this allows for pseudo-randomness without sacrificing determinism. In neither of these execution modes is there ever a rollback to a previous simulation state.

When executing a parallel simulation optimistically, however, rollbacks are expected. If the state of a PRNG stream, indicating the position of the \lq next\rq~value in the sequence, is not also rolled back when necessary, then a different set of numbers will be generated for events when the simulation is resumed. As a result, LP state changes and decisions based on that PRNG stream will differ from an execution where said rollbacks did not occur -- a loss of determinism.

% \note{We use an independent PRNG stream for generating tie breaker values on each LP. This value is generated as events are created and the PRNG stream is rolled back should the corresponding event be rolled back or cancelled.}

\begin{corollary}
  Using a rollback-safe tie-breaking uniform random value in execution of Proposition~\ref{prop:no-zero-unbiased} will yield a deterministic unbiased ordering of events without violating causality given no zero-offset events.
\end{corollary}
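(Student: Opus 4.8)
The plan is to treat this corollary as the composition of two facts already in hand: Proposition~\ref{prop:no-zero-unbiased} supplies the \emph{unbiased} and \emph{causality-preserving} halves of the claim outright (and applies precisely because we assume no zero-offset events), so the only new obligation is to upgrade ``an unbiased ordering'' to ``a \emph{deterministic} unbiased ordering.'' I would therefore devote the proof entirely to isolating and discharging the single threat to determinism identified in Section~\ref{sec:random-determinism}: that optimistic rollbacks perturb the PRNG streams generating the tie-breaking values, causing a different set of values — and hence a different sorted order — on a subsequent run.

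First I would fix a seed and argue, by induction over virtual time, that the sequence of events \emph{created} by each LP (and thus the sequence of tie-breaking draws it makes) is invariant across executions. The base case is the common initial state. For the inductive step, an LP's next committed event is determined by virtual time signature order, and by Assumption~\ref{as:no-past-events} together with Definition~\ref{def:causal}, every event that can causally affect that LP's state at virtual time $t$ has a strictly smaller timestamp (or reaches it via a sender at no later time); rollback/reverse computation therefore guarantees that by the time $t$ is committed the LP has processed exactly the causally-prior events in an order fixed by signatures already assigned. Deterministic LP logic then yields a fixed multiset of outgoing events with fixed PRNG draws.

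Next I would invoke Assumption~\ref{as:rng}: the tie-breaking stream is independent of every model stream and is rolled back in lockstep with the events it served, so whenever an event creation is undone the stream pointer returns to exactly its pre-creation state and the re-creation draws the identical value. Combined with the induction, this shows the map from events to tie-breaking values is a function of the execution-independent causal history alone; hence every event's virtual time signature is identical across runs, and by Lemma~\ref{lem:permutation} the total order these signatures induce is identical and unbiased on every run. Causality is inherited verbatim from Proposition~\ref{prop:no-zero-unbiased}.

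I expect the main obstacle to be making that induction airtight: one must rule out the circular worry that the order in which an LP processes tied events depends on values themselves generated out of order during speculative execution. The resolution is that a value, once drawn for an event, travels \emph{with} that event and is restored with it on rollback, so the signature is stable the moment the event first exists; speculative mis-orderings are transient and are erased before any committed state depends on them. Spelling out ``committed state depends only on committed, correctly-ordered predecessors'' is the delicate step, and it leans entirely on Assumption~\ref{as:no-past-events} together with the rollback-safety hypothesis.
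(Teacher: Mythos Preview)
Your proposal is correct and considerably more thorough than what the paper itself offers: the paper states this corollary with no explicit proof at all, treating it as an immediate consequence of the two paragraphs preceding it (PRNGs are deterministic from a seed; rollback-safety restores the stream pointer so that cancelled draws are reissued identically) together with Proposition~\ref{prop:no-zero-unbiased}. Your decomposition into ``Proposition~\ref{prop:no-zero-unbiased} handles unbiasedness and causality; rollback-safety handles determinism'' is exactly the paper's implicit reasoning, so in that sense you match the intended route.

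Where you go further is in actually spelling out the determinism step via an induction on virtual time and explicitly confronting the circularity worry (that the PRNG state at creation time depends on processing order, which depends on signatures, which depend on PRNG draws). The paper simply does not engage with this; it takes for granted that ``rollback-safe'' is enough and moves on. Your argument is the one a skeptical reader would want, and it is sound: the no-zero-offset hypothesis guarantees every causal parent has strictly smaller timestamp, so the induction on virtual time is well-founded and the PRNG state at each creation point is fixed by strictly earlier, already-determined history. One small sharpening you might make explicit: since timestamps are real-valued, the induction is really over the well-ordering given by the (countable) set of timestamps that actually occur, or equivalently a strong induction showing that the signature of any event depends only on events at strictly smaller timestamps. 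That is the content of your inductive step, but naming the well-ordering removes any ambiguity about ``next virtual time.''
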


% If great care is not taken to ensure that, given a rollback, that any wrongly performed forward progress of a tie-breaking PRNG is reversed, then determinism is almost certainly lost. But if PRNG state is rolled back by the correct amount and at the correct moment in the simulation, then 

Any extra identifier that enables the comparison of two events that would otherwise be incomparable must be implemented into all parts of a system that compares two events. If, for instance, a PDES system is configured to process events in a specific order but care is not taken to recognize any straggler-message -- based on virtual time signatures -- and correct it, then there will no longer be any guarantee of determinism. Any events irrevocably committed to the simulation history must be in agreement with the new, extended, definition of virtual time.
% \note{it is important that the simulator recognizes that if an event in its arrival queue ties with the last processed event AND the tiebreaker of the event in the queue is less than the tie-breaker of the last processed event, that this means it needs to rollback. The deterministic ordering necessitates that in the case of event ties, events committed to the ledger must be in ascending tiebreaker order. Anything else isn't deterministic!}

\section{Zero-Offset Event Simultaneity}\label{sec:relaxing}
Proposition~\ref{prop:no-zero-unbiased} makes certain assumptions about the nature of the problem and the simulation environment it operates in. Specifically, it does not allow for zero-offset events to exist in the simulation. While this may seem rather inconsequential from a real-world perspective -- we are quite used to imposed speed limits in traversing space-time -- this is not necessarily a guaranteed assumption for every possible simulated model.

To allow for zero-offset events in a simulation, there are some finer details that must be established to ensure that a deterministic, valid, unbiased total ordering of events can be found. Part of what makes this difficult is that the act of creating zero-offset events, by definition, creates event ties. As a result, an event that generates a new event with a zero-offset is creating a new event that, in virtual time, occurs at the same time as the event that created it. In this instance, the order in which these two events are committed to simulation history is very important.

% Assumptions~\ref{as:inter-ordering} and~\ref{as:pe-queues} are relatively safe to assume will exist in a well implemented parallel discrete event simulation. it is worth discussing the importance of the other three assumptions in the validity of the solution and what workarounds, if any, exist should an assumption be relaxed.
% \subsubsection{Assumption~\ref{as:inter-ordering}}. This assumption is simply an understanding of the problem space: stating that determinism cannot be expected in a parallel executed computing job. If, in contrast, the opposite was true, this makes the implemented deterministic tie-breaking feature of little use as determinism is already assured.

It's also important to note that while Assumption~\ref{as:no-past-events} is standard for virtual timestamps, the same must also be held true for any happens-before relations inferred from an extended definition of virtual time. Furthermore, the virtual time signature of any event must be strictly greater than the time signature of its causal parent.

\subsection{Challenge of Zero-Offset Events} \label{sec:zero-offset-challenge}
We have shown with Proposition~\ref{prop:no-zero-unbiased} that an unbiased total ordering of simultaneous events can be found with a uniform random value acting as a tie-breaker. This works by establishing an unbiased hard rule of how any two events should be ordered in the simulation. Proposition~\ref{prop:no-zero-unbiased} worked because any arbitrary ordering of otherwise incomparable events will result in a valid simulation.

This becomes tricky once zero-offset events are introduced. Because a parent event and its zero-offset child have the same virtual timestamps but one event \emph{definitely} caused the other, the ordering in which these two events should be ordered in the final simulation history cannot be arbitrary. 

In a solution using Proposition~\ref{prop:no-zero-unbiased}, to create an unbiased random ordering of events, an absolute comparison of i.i.d. random tie-breaking values is utilized. Given an event $E$ that generates a child event $E'$ with zero-offset, it is entirely possible that the uniform random tie-breaking value of $E'$ could be less than that of event $E$. By our extended definition of virtual time, this would imply that event $E'$ \emph{happened before} event $E$ which is impossible since event $E$ directly caused $E'$. This contradiction caused by a zero-offset event is exactly the same as what would happen if a time-travelling negative-offset event were created. This establishes the following:

\begin{lemma}\label{lem:causal-guarantee}
  No event can be created in a way that would classify it as happening before any events that caused or happened-before it.
\end{lemma}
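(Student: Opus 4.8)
The plan is a proof by contradiction that rests on the causal-consistency requirement on virtual time signatures stated immediately above the lemma: the virtual time signature of any event must be strictly greater than that of its causal parent, and happens-before is exactly the order induced by lexicographic comparison of these signatures. So the lemma is really the assertion that this requirement, together with the partial-order nature of happens-before, forbids the offending construction.

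First I would assume the negation: some event $B$ is created and assigned a virtual time signature under which $B$ happens-before an event $A$, where $A$ either causes $B$ ($A \rightarrow B$) or happens-before $B$. I would then split into these two cases. If $A \rightarrow B$, I would take a causal witness chain $A = E_0, E_1, \ldots, E_n = B$ as in Definition~\ref{def:causal} and push the ``strictly greater signature than the parent'' property along it: each adjacent step is either same-process with strictly increasing signature, or a message send whose receiving event has a strictly later signature by the signature-level form of Assumption~\ref{as:no-past-events}; hence the signature of $B$ strictly exceeds that of $A$, i.e.\ $A$ happens-before $B$. If instead $A$ already happens-before $B$, this conclusion is immediate. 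In either case we now have both $A$ happens-before $B$ and $B$ happens-before $A$ for distinct events, contradicting antisymmetry of the happens-before order (it is the strict order derived from comparing totally preordered signatures, so two distinct comparable events cannot each precede the other). Therefore no such $B$ can be created.

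The step I expect to be the main obstacle is not the calculation but the framing: as stated, the lemma asserts the impossibility of a construction rather than a property of a fixed object, so I would be careful to phrase it as an invariant the ordering mechanism must maintain -- any signature assignment that placed an event before something on its own causal or happens-before past would contradict the causal-consistency requirement on signatures, and is thus disallowed by construction -- and to note that the zero-offset scenario of Section~\ref{sec:zero-offset-challenge} and a negative-offset (``time-travelling'') event are both instances of exactly this forbidden situation. A secondary subtlety is threading the argument cleanly through mixed causal chains, but this only reuses Definition~\ref{def:causal} together with the signature-level form of Assumption~\ref{as:no-past-events}, so no new machinery is required.
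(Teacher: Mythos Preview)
The paper does not supply a formal proof of this lemma. It is stated immediately after the sentence ``This contradiction caused by a zero-offset event is exactly the same as what would happen if a time-travelling negative-offset event were created. This establishes the following:'', and thereafter functions as a named constraint that any valid ordering mechanism must respect rather than as a derived result. Your proposal is therefore considerably more formal than anything the paper offers.

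That said, your argument is sound and faithful to the paper's intent. You correctly identify that the operative premise is the signature-level causal-consistency requirement stated two paragraphs above the lemma, and your contradiction via antisymmetry of the signature order is precisely the formalization of the paper's informal ``$E'$ happened before $E$ which is impossible since event $E$ directly caused $E'$'' observation. The case split and the threading of the strict-inequality along a causal witness chain from Definition~\ref{def:causal} are additions the paper does not make, but they are the right way to handle the general statement (an arbitrary causal ancestor or happens-before predecessor, not just the immediate parent) and introduce no error. Your closing remark about framing the lemma as an invariant the mechanism must maintain, rather than a property of a fixed object, is exactly how the paper uses it downstream in Propositions~\ref{prop:biased-zero-offset} and~\ref{prop:unbiased-zero-offset}. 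In short: your proposal is correct and more complete; the paper treats the lemma as essentially axiomatic, justified only by the single-step parent--child example that precedes it.
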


Without the tie-breaker value enforcing the final ordering of these events, this will execute and complete fine - but at the risk of non-determinism if there are other event-ties. But with the tie-breaking mechanism, the PDES engine will enter an endless loop of rollbacks and replays as it tries, fruitlessly, to reconcile this situation and adhere to the deterministic tie-breaking order.

If all we consider was their standard virtual timestamps, the two events are incomparable but they \emph{are} comparable by the ordered chain of causality. Definition~\ref{def:causal} does not account for zero-offset events; if we are to have a stable simulation with zero-offset events we need to extend our definition of virtual time to ensure that the virtual time signature of $E < E'$ and so on for any descendants of event $E'$.

\begin{definition}
  An event which shares a standard virtual timestamp with another but has a lower tie-breaking value is defined to happen infinitesimally before the other. The converse implies that the other event happens infinitesimally after the first.
\end{definition}

\begin{definition}
  A zero-offset event, sharing the virtual timestamp with that of its parent, is considered to happen infinitesimally after its parent. The converse implies that the parent happens infinitesimally before its child(ren).
\end{definition}

To demonstrate the challenge of utilizing the tie-breaking mechanism we have established in Proposition~\ref{prop:no-zero-unbiased}, let us assume that we have some ability to ensure that Lemma~\ref{lem:causal-guarantee} is maintained while using the same tie-breaking mechanism as before. Then we have nothing to prevent the possibility of generating events with virtual time signatures to \emph{happen infinitesimally before} others that have already been processed.

\begin{table}[t]
  \caption{Events and their virtual times and tie-breaking values for the example shown in Figure~\ref{fig:sequential-broken}.}\label{tab:event-timestamps}
\begin{tabular}{|r|c|}
\hline
Event&(time, tie-breaker)\\\hline
$A$&$(1,0.1)$\\\hline
$B$&$(1,0.15)$\\\hline
$A'$&$(1,0.40)$\\\hline
$B'$&$(1,0.30)$\\\hline
$A''$&$(1,0.20)$\\
\hline
\end{tabular}
\end{table}

\begin{figure}
  \centering
  \includegraphics[width=.8\linewidth]{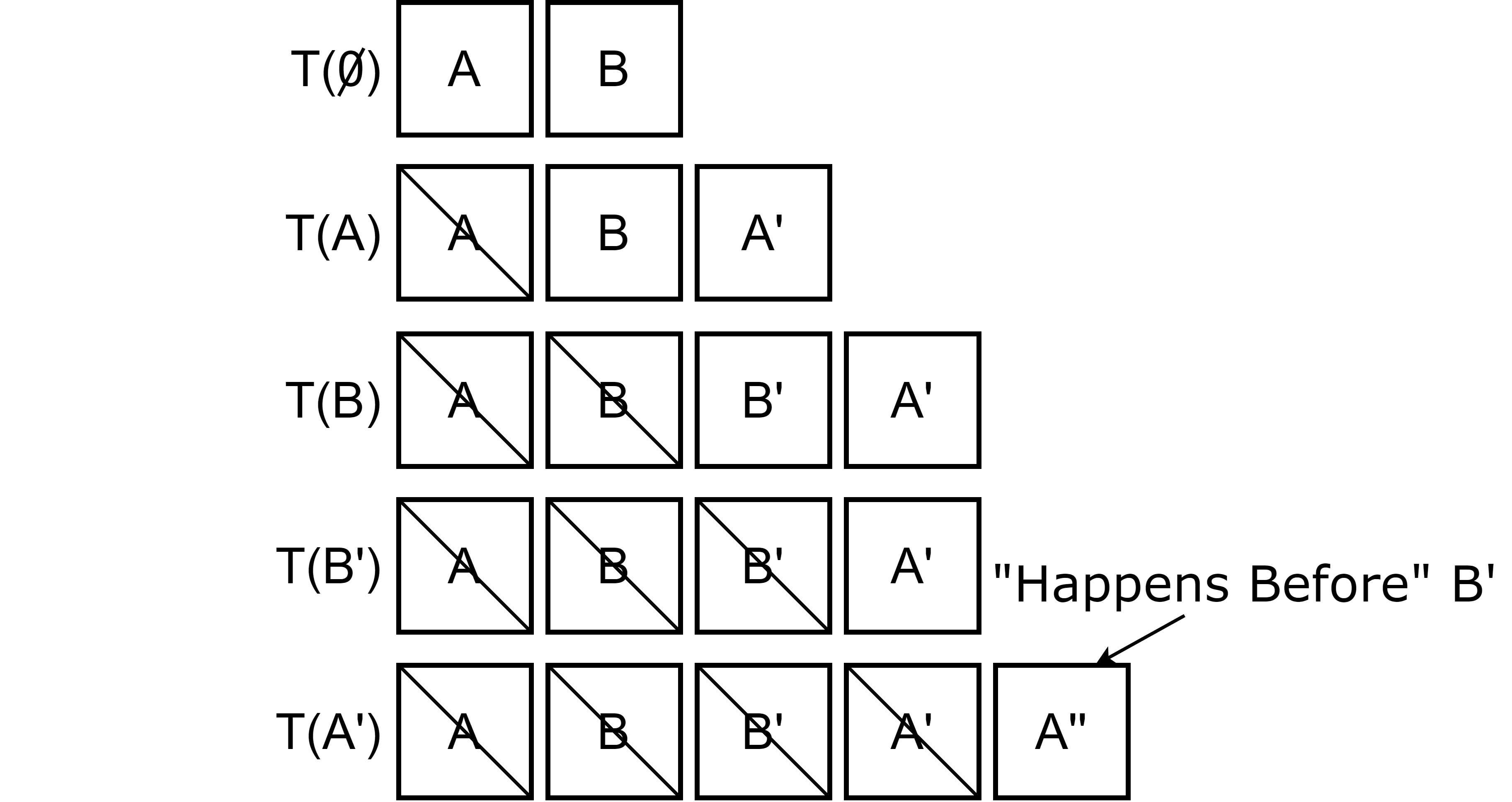}
  \caption{Diagram showing event processing queue state while processing previously given simultaneous events. $T(-)$ represents the time in the simulation after a given event has been processed. Event $A''$ is created by $A'$ but according to its tie-breaker value should happen before $B'$ which has already been processed as designated by a slash.}\label{fig:sequential-broken}
\end{figure}

Consider a sequential simulation, there is no ability to roll-back events should a new one be generated that \emph{happens before} one previously processed -- if a mechanism cannot be used for generating a valid ordering in sequential, then it is unlikely to consistently work for parallel executions either. As an example, two events, $A$ and $B$ each generate a zero-offset child event: $A'$ and $B'$. $A'$ generates a single zero-offset child, $A''$. The events are each scheduled to occur at the same point in virtual time but have uniform random generated tie-breaking values to determine the order of events not directly caused by each other; the virtual time and tie-breaker values for these events are shown in Table~\ref{tab:event-timestamps}.

In this sequential simulation, we would start by processing events $A$ and then $B$. This would schedule two new zero-offset events $A'$, and $B'$. We could process these two in order according to their tie-breaker values without issue ($B'$ first). However, after processing $A'$, another child event is created, with a tie-breaker value that is lower than $B'$. The simulation processing queue state for this example can be observed in Figure~\ref{fig:sequential-broken}. The result of this is an irreparable error in the simulation. In a sequential execution, an effective negative-time event was received, breaking Lemma~\ref{lem:causal-guarantee}. In an optimistic execution, we will enter an infinite rollback scenario and cause instability in sequential.

This is exactly the challenge of creating a fair deterministic ordering of zero-offset events: it is not possible for the simulator to know if a zero-offset event will create another zero-offset event until after its processed. Thus, how can a fair random ordering of zero-offset tie events be determined?

\subsection{Biased Random Causal Ordering}\label{sec:biased-random}
It is now clear that a single pair of randomly defined tie-breaking values is insufficient for ordering two events given the existence of zero-offset events. It is always possible to violate causality and that should \emph{never} be possible; any scheme to determine ordering two events must guarantee causality is maintained.

We know, however, that a simulator that utilizes standard virtual time can generate a deterministic total ordering of events if no ties exist because all events have a unique time and the final ordering should just be monotonically increasing by their timestamp. This is possible because when new events are created, they are encoded with an offset: an additive \lq time from now\rq~value. So long as that value is not negative, it guarantees causality is maintained; as long as the value is positive, it guarantees determinism.

What if, when generating a zero-offset event, instead of using a random tie-breaking value from (0,1), we generate that random value but \emph{add} it to the random tie-breaking value of the parent event. The new tie-breaking value will be strictly greater than the parent, assuring that any comparison of the two will always order the parent first. A LP-Event diagram using the same generated values in the previous example using Table~\ref{tab:event-timestamps} but adding consecutively generated tie-breaking values when creating child events is shown in Figure~\ref{fig:additive-diagram}. 
% In this example, regular offset events $A$ and $B$ have encoded .10 and .15 tie-breaker values. Event $B'$ generates a new value, .25, and adds it to the value of its parent to obtain a tie-breaker value of .40. Events $A'$ and $A''$ do the same with their own generated values of .40 and .30.

\begin{figure}[t]
  \centering
  \includegraphics[width=.8\linewidth]{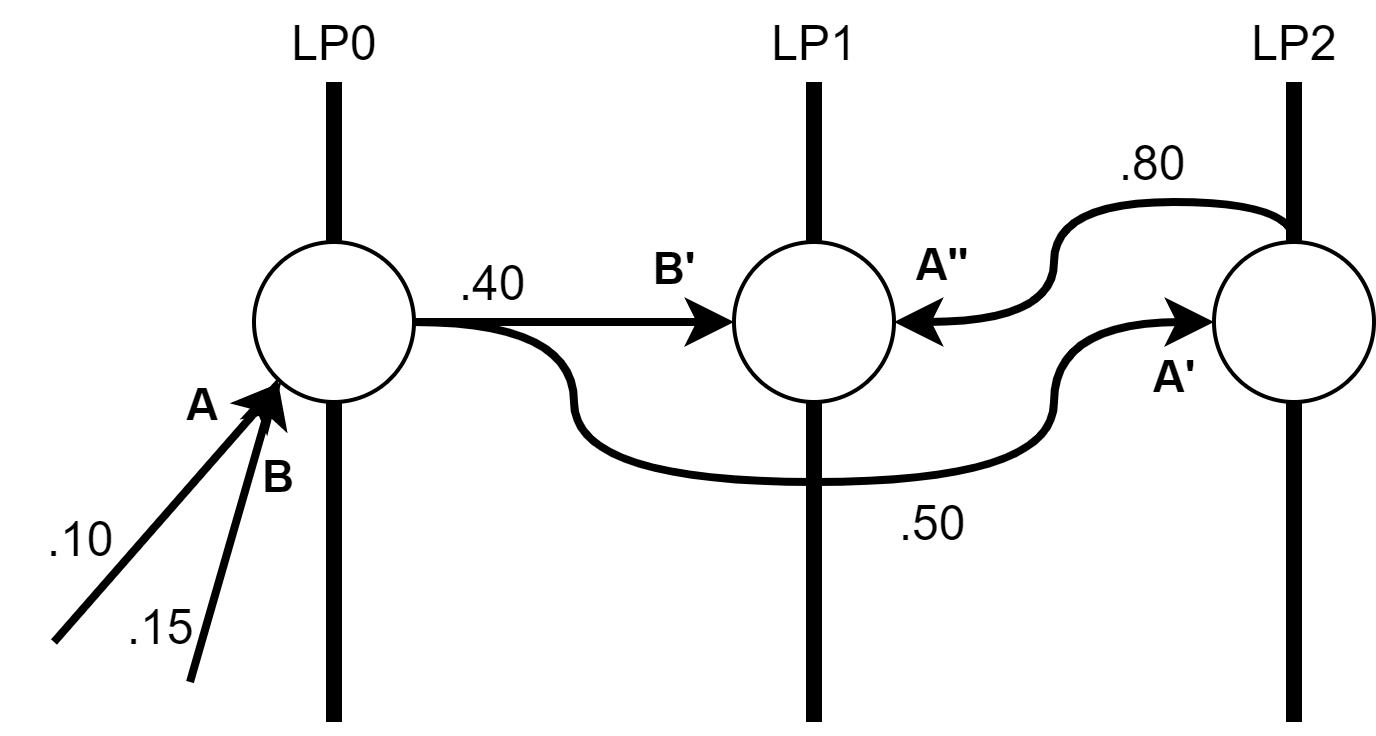}
  \caption{LP-Event Diagram depicting events all simultaneously occurring with the same virtual timestamp but depicted with varying legal tie-breaking values encoded using the additive scheme described in Section~\ref{sec:biased-random}. The final ordering in the simulation for these tied events is: $[A,B,B',A',A'']$}\label{fig:additive-diagram}
\end{figure}

\begin{proposition}\label{prop:biased-zero-offset}
  Providing, in addition to the standard virtual timestamp, a secondary value defined by summing deterministic uniform random values generated by causally related zero-offset events and comparing tied events based on that will yield a randomized-but-biased deterministic ordering of events, including zero-offset events, without violating causality.
\end{proposition}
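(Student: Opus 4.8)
The plan is to fix the additive scheme precisely and then verify, in turn, the four assertions packed into the statement: that the resulting ordering is (i) well defined, (ii) deterministic, (iii) causality-preserving, and (iv) randomized yet biased. Concretely, assign to each event $E$ a tie-breaker $\tau(E)$ by drawing a fresh i.i.d.\ uniform $u_E\in(0,1)$ at creation time and setting $\tau(E)=u_E$ if $E$ is an initial event or a positive-offset child, and $\tau(E)=\tau(P)+u_E$ if $E$ is a zero-offset child of a parent $P$ (the scheme illustrated in Figure~\ref{fig:additive-diagram}); the \emph{virtual time signature} of $E$ is then $(t(E),\tau(E))$ compared lexicographically, and events are committed in increasing signature order exactly as in Ruleset~\ref{rule:n-way} with $\tau$ replacing the single random key. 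Note $\tau$ may exceed $1$, which is harmless because the timestamp component is compared first. The technical heart of the argument is one monotonicity claim, which I would establish before anything else: \emph{along any causal chain every step of which is a zero-offset message send, $\tau$ is strictly increasing}.

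That claim follows by induction on chain length from Definition~\ref{def:causal}: for a one-step chain $P\to E$ we have $\tau(E)=\tau(P)+u_E>\tau(P)$ since $u_E>0$, and the inductive step is transitivity of $<$; this is also the telescoped form of the ``sum over causally related zero-offset events'' in the statement. Together with Assumption~\ref{as:no-past-events} — a positive-offset step strictly increases $t$ — this lifts to the full causal relation: given any $E\to E'$ in the sense of Definition~\ref{def:causal}, split the witnessing chain into maximal blocks of zero-offset steps (on which $t$ is constant and $\tau$ strictly increases, by the claim) and single positive-offset steps (on which $t$ strictly increases), and concatenate the resulting strict inequalities to get $(t(E),\tau(E))<(t(E'),\tau(E'))$ lexicographically. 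Hence the signature-sorted total order is a linear extension of the causal partial order, which is precisely the statement that causality — and Lemma~\ref{lem:causal-guarantee} — is never violated, and it also rules out the instability of Figure~\ref{fig:sequential-broken}: when an already-committed parent $P$ emits a zero-offset child $E$, we have $\tau(E)>\tau(P)$, and since $P$ was committed last it has the largest signature committed so far, so $E$ is strictly later than every committed event and never lands in the committed past. Determinism is then immediate: by Assumption~\ref{as:rng} and the rollback-safe PRNG discussion of Section~\ref{sec:random-determinism}, every $u_E$, hence every $\tau(E)$, hence the entire sorted order, is a deterministic function of the seed, so re-executions with the same seed reproduce the ordering verbatim.

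For the ``randomized-but-biased'' qualifier: the ordering is randomized because two events sharing a timestamp but causally unrelated have signatures whose relative order turns on freshly drawn uniforms, so changing the seed reorders them — the seed-sweeping capability the paper is after. It is biased because the construction forces $\tau(E)=\tau(P)+u_E>\tau(P)$, so a parent precedes its zero-offset child with probability one rather than $1/2$, and more generally an event reached through $k$ zero-offset steps carries a key that is a sum of $k$ (or more) independent uniforms, stochastically larger for larger $k$, so deeper events are systematically pushed later; consequently the keys are neither uniform on a common interval nor independent (a parent and its zero-offset child are positively correlated), the i.i.d.-uniform hypothesis of Lemma~\ref{lem:permutation} fails, and the induced distribution over the causally-valid linear extensions is demonstrably not the uniform one of Proposition~\ref{prop:no-zero-unbiased}. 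I expect the main obstacle to be pinning down this ``biased'' half rigorously — exhibiting a small configuration whose causally-valid orderings are provably not equiprobable and contrasting it cleanly with Proposition~\ref{prop:no-zero-unbiased} — together with arguing carefully that strict signature-monotonicity along \emph{every} causal chain, not merely the worked chain of Table~\ref{tab:event-timestamps}, genuinely suffices to prevent unbounded rollback under the simulator's dynamics.
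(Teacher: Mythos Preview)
Your proposal is correct and considerably more systematic than the paper's own argument. The paper's proof is a short contradiction sketch on a specific four-event configuration ($A\to B$, $A\to C\to D$, all zero-offset from $A$): it assumes $D$ violates Lemma~\ref{lem:causal-guarantee} against an already-processed $B$, observes $\tau(D)>\tau(C)$, notes that $D$ can precede $B$ only if $C$ already did, and closes the contradiction. It does not separately argue determinism or the ``randomized-but-biased'' clause inside the proof; those are handled by surrounding prose (the rollback-safe PRNG discussion and the Irwin-Hall paragraph).

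Your route is different in structure: you isolate the key monotonicity claim (that $\tau$ strictly increases along every zero-offset causal chain), prove it by induction on chain length, lift it to arbitrary causal chains by splitting into zero-offset blocks and positive-offset steps, and then read off that the signature order is a linear extension of the causal partial order. This is a more general argument than the paper's single worked configuration and directly addresses your stated worry about ``every causal chain, not merely the worked chain of Table~\ref{tab:event-timestamps}.'' You also fold the determinism and bias claims explicitly into the proof rather than leaving them to the text. What the paper's approach buys is brevity and a concrete picture of the failure mode being ruled out; what yours buys is that nothing hinges on the particular shape of the example, and the ``no event lands in the committed past'' observation (your Figure~\ref{fig:sequential-broken} remark) falls out cleanly from the linear-extension statement rather than from case analysis.
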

\begin{proof}
  Given four simultaneous events $A,B,C,$ and $D$ where $A\rightarrow~B$~ and ~$A\rightarrow~C\rightarrow~D$ and $B,C,D$ are zero-offset from $A$ with additive tie-breaker values.
  Assume that an event $D$ is an event that violates Lemma~\ref{lem:causal-guarantee} in conflict with a previously processed event $B$. The tiebreaker value of $D$ is strictly greater than that of its parent: $C$. The only way for $D$ to be created so that it happens before $B$ is if $C$ also happened before $B$. If $C$ happened before $B$, then no ordering of $B$ and $D$ would violate Lemma~\ref{lem:causal-guarantee}. Thus, by contradiction, additive tie-breaking values will safely allow for zero-offset event tie-breaking without violating causality established from Lemma~\ref{lem:causal-guarantee}.
\end{proof}

% Assume that $A''$ is a zero-offset event with an additive tiebreaker that causes an effective negative-time error as described in Section~\ref{sec:zero-offset-challenge} as it happens before a previously processed event $B'$. The tiebreaker value of $A''$ is strictly greater than its parent: $A'$. The only way for $A''$ to happen before $B'$ is if $A'$ also happened before $B'$. If $A'$ happened after $B'$, then the tiebreaker value of $A'$ is strictly greater than $B'$ 

% This method avoids the issue described in the previous section (Figure~\ref{fig:sequential-broken}). Based on that example, for event $A'$ to create an event $A''$ with a tie-breaker value less than a previously processed event $B'$, then the tie-breaker value of $A'$ must also be less than $B'$ which would mean that $A'$ was actually processed before $B'$ and the existence of $A''$ is known before its conflicting event, $B'$ is processed. Because any 

This does allow for randomness to be introduced in the ordering of events but it introduces some potentially unintended bias into the comparison of tie-breaking values. For example, given two events scheduled for the same time: $A$ and $B$. Event $A$ creates a zero offset child $A'$ who creates a zero offset child $A''$. The four events are each scheduled for the same virtual time. If the tiebreaker value for descendants is additively generated, then the probability that event $A''$ comes before event $B$ is not 50\%. it is possible that, based on the tie-breaking value, that $A''$ happens before $B$, but it is far more likely that $B$ happens before $A''$ because the expectation for its tie-breaker value is lower than that of $A''$. 

This is because the tie-breaking value encoded into events $A'$ and $A''$ are no longer a uniform random value from (0,1) but is instead sampled from an Irwin-Hall distribution~\cite{johnson1995continuous}. The expected value of an Irwin-Hall value is proportional to the number of uniform random values added to generate it and thus the more zero-offset descendants recursively created, the less likely they are to happen before another, independent, event. In a way, this may seem natural, but it is important to remember that these events are all scheduled by the model to happen at the same time. 

This method will assuredly result in a deterministic ordering and this ordering will be somewhat randomized based on the simulation's input seed. Is there, however, a fairer way to break ties within zero-offset events?

\subsection{Unbiased Random Causal Ordering}\label{sec:unbiased-causal}
When trying to find a fair way to order simultaneous events, something to consider is whether it even makes sense for other events to interject between an event and its zero-offset child. The two related events are supposed to occur simultaneously with the exception of one being causally dependent by the other (the child \emph{happens infintesimally afterward}). By that logic, it makes sense that they should be consecutive in the final ordering. 

Previously, in Section~\ref{sec:biased-random}, we observed that the only way for an event's child to precede another, independent, simultaneous event if the original event also preceded the other.

What made the additive tie-breaker value successful was that it guaranteed that a zero-offset child event will have a value strictly greater than its parent. But it also established a unique timestamp that will not -- should there be more descendants -- violate some established order of other already processed events.

We can devise another mechanism that will ensure that any subsequent zero-offset children of a parent event be processed before other independent events (with a greater tie-breaking value than said parent). Consider ordering two sequences of numbers. There are numerous ways to pick some relation and choose which should be considered first in an ordering of the two sets. One common way is through \emph{lexicographical ordering}. Just as one would find where to place new words into the Oxford dictionary of English~\cite{stevenson2010oxford}, we start by comparing the first items in the two sets. Should those two match, we recursively compare each of the subsequent items in the sequences until we find a pair that are different. Determining which of the two is \lq less\rq~ than the other, lexicographically, is then based on that final comparison.

For example, let us consider two sequences $S_1 = [5,3,9,3]$ and $S_2 = [5,3,4,6]$. Lexicographically, $S_2 < S_1$ because in the third component-wise comparison, $4 < 9$. We also observe that it does not matter how long the sequence of $S_2$ is; as long as those first three numbers remain in each sequence, then $S_2 < S_1$ \emph{always}.

We can extend our definition of virtual time again to include, not a single tie-breaking value, but a \emph{sequence} of tie-breaking values. The sequence is comprised of the tie-breaking values of historical zero-offset parental events -- whenever a new event is created, a new tie-breaking value is generated and appended to the back of this running list. When a regular-offset event is created, the sequence is discarded and then the newly generated tie-breaking value for the current event is added as the sole value.

When comparing two tie-breaking sequences of different lengths but all $N$ components of the shorter one match the first $N$ components of the longer sequence, priority is given to the shorter one as this can only happen if the event owning the shorter sequence \emph{caused} the other. Referring back to the English dictionary analogy, the single-letter word \lq A\rq~ comes before \lq aardvark\rq~ but \lq aardvark\rq~ comes before \lq apple\rq~\cite{stevenson2010oxford}.

The result of this extension allows for us to make every single event in the simulation comparable to one another via the new virtual time signature. In this case the virtual time signature is defined as the sequence of the virtual timestamp followed by the regular-offset tie-breaker and all consecutively generated zero-offset tie-breakers. Because of the benefits of lexicographical ordering, causality is guaranteed as any zero-offset descendant's time signature will be strictly greater than its parent, grandparent, etc. The happens-before comparison of any two tied events that are unrelated will, as in Proposition~\ref{prop:no-zero-unbiased}, be based on a comparison of two uniform-random values.

Additionally, the comparison of any two tied events that \emph{are} related will also be based on a comparison of two-uniform random values \emph{unless} they are causally locked to a single relative ordering because one is a descendant of the other, yielding:

\begin{proposition}\label{prop:unbiased-zero-offset}
  Providing, in addition to the standard virtual timestamp, a secondary sequence of values defined by deterministic uniform random values generated by causally related zero-offset events and comparing tied events lexicographically based on that will yield an unbiased random deterministic ordering of events, including zero-offset events, without violating causality.
\end{proposition}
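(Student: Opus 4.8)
The plan is to establish the proposition in two parts mirroring the two claims embedded in it: first that lexicographic comparison of the tie-breaking sequences never violates causality, and second that, conditioned on two tied events being causally unrelated (or related but not by descent), the induced pairwise ordering is fair (each of the two orderings occurs with probability $1/2$). For the causality part, I would reuse the structure of Lemma~\ref{lem:causal-guarantee} together with the two definitions introduced just before the statement: a zero-offset child is defined to happen infinitesimally after its parent, and the sequence construction appends a freshly generated value to the parent's sequence on a zero-offset creation (and resets it on a regular-offset creation). Hence the child's sequence has the parent's sequence as a strict prefix, so by the prefix rule of lexicographic order the parent is strictly less than the child; by transitivity of lexicographic order this extends to all descendants (child, grandchild, \dots). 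This shows the virtual time signature of any event is strictly greater than that of any ancestor, which is exactly what Section~\ref{sec:relaxing} demanded, so no event is ever classified as happening before something that caused it.

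For the unbiasedness part, I would first invoke Lemma~\ref{lem:safe-causality}'s generalization: two tied events are either causally unrelated, or one is a (zero-offset) descendant of the other, since any causal link within a tie must be through zero-offset steps. In the descendant case the ordering is forced by the prefix rule above — this is the ``causally locked'' case the proposition explicitly exempts. In the unrelated case, let the two events have signatures that first differ at some component index $k$; I would argue that the two values being compared at position $k$ are outputs of two independent PRNG streams (by Assumption~\ref{as:rng}, each LP's tie-breaking stream is independent), hence i.i.d.\ uniform on $(0,1)$, so by symmetry each orders below the other with probability $1/2$, and by Lemma~\ref{lem:permutation}/\texttt{PERMUTE-BY-SORTING} any $n$-way collection of mutually unrelated tied events is placed in a uniformly random relative order. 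The key point distinguishing this from Proposition~\ref{prop:biased-zero-offset} is that the comparison at position $k$ is between two \emph{single} uniform values rather than between sums of varying numbers of them, so no Irwin--Hall bias is introduced.

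The main obstacle I anticipate is carefully handling the case where one event's sequence is a strict prefix of the other's but the two events are \emph{not} in an ancestor--descendant relation — I need to argue this cannot happen for tied events, i.e., that a shorter sequence being a prefix of a longer one among simultaneous events forces the causal relationship. This requires showing that the only way to extend a tie-breaking sequence is via zero-offset child creation off the event owning the shorter sequence, so a prefix relationship is equivalent to a descent relationship; I would establish this as a small lemma about how sequences are built (regular-offset creation discards the sequence, zero-offset creation appends), making prefix-hood and descent coincide. A secondary subtlety is the same caveat noted after Proposition~\ref{prop:no-zero-unbiased}: distinctness of PRNG values is only assumed with negligible failure probability, so ``unbiased'' holds modulo that negligible collision probability (with the recursive \texttt{PERMUTE-BY-SORTING} fallback available if needed); I would state this assumption explicitly rather than prove it.
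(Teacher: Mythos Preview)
Your proposal is correct and follows essentially the same line as the paper: causality is guaranteed because a zero-offset child's tie-breaking sequence strictly extends its parent's (so lexicographic order respects descent), and unbiasedness holds because the decisive comparison is always between two single uniform draws rather than Irwin--Hall sums. The paper's own proof is much terser than yours---it simply reuses the Proposition~\ref{prop:biased-zero-offset} argument for the causality/determinism part and then works the concrete $A,B,C,D$ example to observe that the $D$-versus-$B$ ordering reduces to the fair $C$-versus-$B$ coin flip; it does not give your general first-differing-index argument, your explicit descendant/unrelated case split, or your prefix-implies-descent lemma (the latter is left implicit, absorbed into the negligible-collision assumption you also flag).
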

\begin{proof}
  The exact same logic as demonstrated in the proof of Proposition~\ref{prop:biased-zero-offset} can be applied here to guarantee determinism as lexicographical ordering provides the same strict less-than/greater-than relation between two events. Using the same events from that example, however, we can observe that should $C$ happen before $B$, then $D$ assuredly happens before $B$ as it \emph{happens infinitesimally after} $C$. The determination of the ordering of $D$ and $B$ still falls down to a fair comparison of two uniform random values: the two generated for $B$ and $C$. The probability of $D$ happening before $B$ is, effectively, randomly determined by a fair coin flip and is unbiased.
\end{proof}

% \begin{conjecture}
% Unbiased deterministic random orderings are incompatible with zero-offset events co-participating in an event tie with other events outside of the direct causal chain of said zero-offset events.
% \end{conjecture}

% \note{I've tried all day to try and prove this with no luck. One simple attempt would be to state that given 4 tied events: A,B,C,D and B=>C=>D, all 24 permutations of these events are NOT equally likely to occur. \{A,B,D,C\} has a probability of 0 of occurring. But I'm not sure this is exactly stating what I'd like. Because there are actually 4 ways to legally order these: ABCD, BACD, BCAD, BCDA. Can we create a tiebreaker value in an online way that finds any 1 of these four possible orderings with a probability of 25\%? I don't think so but I can't prove it. <When you process B, you don't know how many future zero-offset events may be in its direct causal chain - consider halting problem reduction maybe?>}

% If we couple the causal ordering to the tie-breaker value, ensuring that any events in a causal chain happen in an order that respects causality, then we can regain determinism. When generating a tie-breaker value for a zero-offset event, add the new generated uniform random value to that of the event that caused it. This will guarantee that the tiebreaker value of the newly created event is greater than that of the event that caused it and prevent it from being ordered in a way that violates causality.

\section{Experimental Models}
We evaluate our proposed solution with various methods to observe its capability as well as the performance impact of the solution's overhead. We first analyze its performance impact on a standard PDES benchmarking model: \texttt{PHOLD} with a strong-scaling study. In this model, event ties are infrequent and there are no zero-offset events.

None of the results generated from running \texttt{PHOLD}, however, will tell us about whether the ordering of events was consistent from run to run. To do that we developed a new PDES model which intentionally creates as many event ties as possible with and without zero-offset events and the final LP state is dependent on the ordering of these events. We show the performance impact in a strong scaling study using this new model.

We also stress test the simulation with an exceedingly high number of event ties in a worst-case type scenario and observe the performance impact in a third strong-scaling study based on a variation of that second model.

All experiments presented below were performed on the ROSS PDES engine with and without the unbiased tie-breaking mechanism described in Section~\ref{sec:unbiased-causal}.

\subsection{\texttt{PHOLD} Model}
We utilize the PHOLD benchmarking model included with ROSS. It is a generally well known tool for analyzing the performance of PDES systems~\cite{fujimoto1990,barnes2013}. It generates a lot of events that are sent between an LP's self and other LPs based on a configurable probability parameter.

Because of the sheer volume of events that this model can generate at large scales, this model is not immune to the possibility of event ties but is not exceedingly likely because new events are scheduled at some point strictly in the future at a time that is the result of an exponentially random number valued offset.

\subsection{\texttt{Event-Ties} Model}\label{sec:eties}
Given any implemented solution for the problems discussed in this work, it is difficult to know for certain that the determinism observed in a couple small scale experiments is true determinism or coincidence. This is especially the case if event ties are relatively infrequent.

\begin{algorithm}
  \caption{Event-Ties: Behavior of LP $P$ on receipt of event $e$}\label{alg:behavior}
  \begin{algorithmic}
    \State Given:\\\hspace{.5cm}$r$: preset threshold for desired remote events\\\hspace{.5cm}$l$: preset length of zero-offset event chain\\
    \hrulefill
    \State P.mean\_val$\leftarrow Mean($P.mean\_val, e.val$)$
    \\
    \State Create new event $n$
    \State $n$.val$\leftarrow$ Random Integer [0,100]
    \If {Random\_Unif() $< r$}
      \State \texttt{dest}$\leftarrow$random, non-self, LP 
    \Else
      \State \texttt{dest}$\leftarrow$self
    \EndIf
    \If {$e$ is the $l^{\text{th}}$ zero-offset event}
      \State Send $n$ to arrive at \texttt{dest} at time \texttt{P.now()}+1\hfill //regular-offset
    \Else
      \State Send $n$ to arrive at \texttt{dest} at time \texttt{P.now()} \hfill //zero-offset
    \EndIf
    \end{algorithmic}
\end{algorithm}

To validate that our solution accomplishes its core goal, we developed a ROSS model specifically designed to create a scenario where exact event ordering is absolutely critical to deterministic final results. A model that utilizes a running sequence of mathematical computations with strict ordering, or non-commutative operations, operating on local LP state will make non-determinism evident. Two identically configured simulations with even slightly different event ordering will yield a different final mathematical result.

For our model, we leverage the non-commutative property of the mean of means: 
\[
Mean(Mean(A,B),C) \neq Mean(Mean(A,C),B)
\]

To utilize this property, when an event is received by an LP, its state is updated to be the mean of its current state and the encoded state of the event. If two events are received by an LP at a simultaneous point in virtual time and a deterministic rule for simultaneous event ordering is not implemented, then because of Assumption~\ref{as:inter-ordering} and the above property, simulation determinism is not guaranteed.

The core behavior of the model is described in Procedure~\ref{alg:behavior}. When a new event is received, the recursive mean value in LP state is updated and a new event with a new random integer is created. Where this event is destined is based on a probability challenge so that the number of remote simulation events can be controlled. When this event is to be scheduled is determined by where in the zero-offset chain this event is. If some configured number $l-1$ same-virtual-time events causally preceded this one, then send the new event with an offset of one, thereby breaking the zero-offset chain. If this event \emph{is not} the $l^{\text{th}}$ event in the zero-offset chain, then create another zero-offset event.

% \subsubsection{Eliciting More Non-determinism} \note{cut this for room if needed}
% This model by itself is sufficient for generating non-determinism in a simulation without the deterministic tie-breaking feature enabled. But at certain scales and simulation configurations, finding an example of non-deterministic behavior may be less than frequent or harder to detect. To make the effect of non-determinism more apparent, we can couple the choice of destination LP with the current value of the \texttt{val} variable.

% The simple change of, \lq when sending to a non-self LP, offset the destination by the integer part of the current \texttt{val} all modulo the total number of LPs\rq~elicits significant differences in simulation behavior depending on event ordering.

\subsection{\texttt{Event-Ties-Stress} Model}
In Section~\ref{sec:eties}, we showcased an example model that would generate a large number of tied events with and without zero-offset timings. But we can make this even harder to showcase the capabilities of our tie-breaking mechanism.

Wherein Procedure~\ref{alg:behavior} creates simultaneous chains of zero-offset events, we can instead generate simultaneous trees of zero-offset events shown in Procedure~\ref{alg:behavior2}. The difficulty of this model can be scaled by adjusting the height and degree of each generated zero-offset tree. The lexicographical ordering operating on the tie-breaking sequences is particularly useful in this case as it is able to quickly determine which event happens before another, regardless of if one event is a zero-offset offspring, sibling, cousin, nephew/niece, or unrelated to the other. Furthermore, we have shown in Section~\ref{sec:unbiased-causal} that the probability of any two events being ordered a specific way in the simulation is 50\% unless one is a zero-offset descendant (child, grandchild, etc.) of the other. In that case there is a single possible relative ordering.

Also shown in Procedure~\ref{alg:behavior2} is a mechanism to make sure that only one event from a zero-offset tree creates a new zero-offset tree. Without this, there would be exponential growth of events as the simulation proceeds making it progressively more difficult to complete.

\begin{algorithm}[t]
  \caption{Event-Ties-Stress: Behavior of LP $P$ on receipt of event $e$}\label{alg:behavior2}
  \begin{algorithmic}
    \State Given:\\\hspace{.5cm}$r$: preset threshold for desired remote events\\\hspace{.5cm}$h$: preset height of zero-offset event tree\\\hspace{.5cm}$c$: preset number of zero-offset children generated per event\\
    \hrulefill
    \State P.mean\_val$\leftarrow Mean($P.mean\_val, e.val$)$
    \\
    
    \For{$i = [0:c)$}
      \State Create new event $n$
      \State $n$.val$\leftarrow$ Random Integer [0,100]
      \If {Random\_Unif() $< r$}
        \State \texttt{dest}$\leftarrow$random, non-self, LP 
      \Else
        \State \texttt{dest}$\leftarrow$self
      \EndIf
        \State$n$.descendant\_sum += $i$
        \State Send $n$ to arrive at \texttt{dest} at time \texttt{P.now()}\hfill //zero-offset
    \EndFor
    \If{$e$ is in the $h^{\text{th}}$ level of the tree \& $e$.descendant\_sum is $0$}
      \State$n$.descendant\_sum $= 0$
      \State Send $n$ to arrive at \texttt{dest} at time \texttt{P.now()}+1\hfill //regular-offset
    \EndIf
    \end{algorithmic}
\end{algorithm}

\section{Results}
All simulations performed in this work were executed using at most 4 nodes of the AiMOS supercomputer at Rensselaer Polytechnic Institute's Center for Computational Innovations~\cite{cci}. Each node contains two IBM Power 9 processors each with 20 cores with 3.15GHz clocks and 512GiB RAM. For simplicity of scaling experiments, we utilized a maximum of 32 processors per node. Inter-node communication is facilitated by an Infiniband EDR Fat Tree communication network.

Binaries were compiled using IBM's XLC\_r compiler and Spectrum MPI.

\paragraph{\texttt{PHOLD}: Strong-Scaling}\label{sec:phold-scaling}
We wanted to benchmark the performance of this extension of ROSS in direct comparison with the old version of ROSS (without the deterministic tie-breaker). The \texttt{PHOLD} model is commonly used to benchmark ROSS (and other PDES systems) and is included in the simulator's codebase~\cite{barnes2013,fujimoto1990performance}. 
These simulations were configured to each have 2 million LPs with a target of 10\% remote events generating a total of 1.3 billion net events.

What we observe in Figure~\ref{fig:phold} is that the deterministic tie-breaker version runs a bit slower at lower numbers of ranks but catches up as more ranks a re added. The overhead of implementing the tie-breaker does not significantly impact parallel performance, especially with larger numbers of ranks and because there are so few event ties, the benefit may not be obvious based on runtime alone.

\begin{figure}[t]
  \centering
  \includegraphics[width=\linewidth]{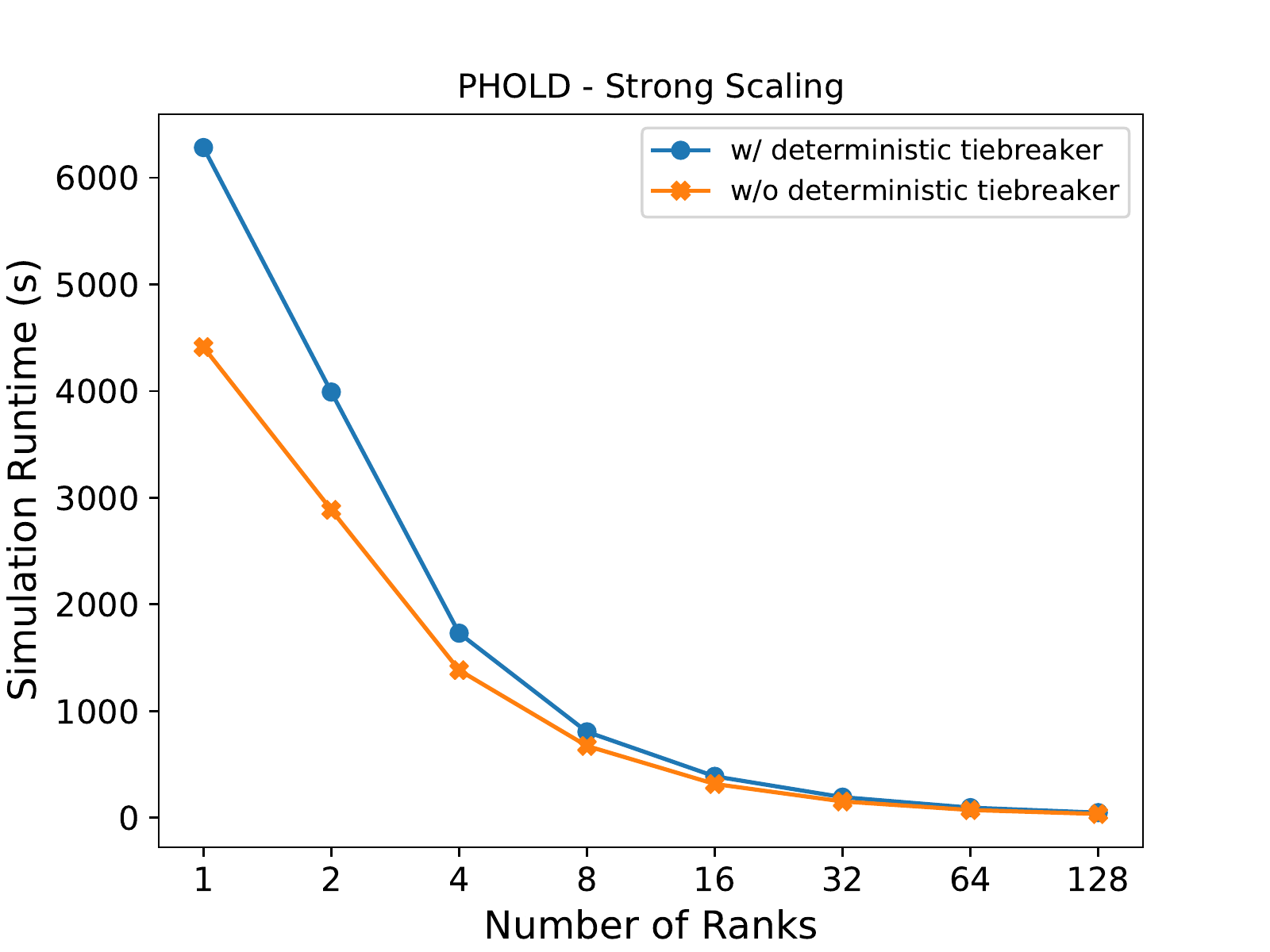}
  \caption{PHOLD strong scaling study with and without a deterministic tie-breaking mechanism.}\label{fig:phold}
\end{figure}

\paragraph{\texttt{Event-Ties}: Strong-Scaling}
Using the \texttt{Event-Ties} model, we create a situation with as many event ties as possible. All event timestamps are either zero-offset or integer incremented. 65,536 LPs were simulated, each creating a single regular-offset event which causes a single zero-offset event. In total, this created 39 million events; no event was created that was not tied with another event in the simulation.

We wanted to be able to show in a plot the capabilities of the deterministic tie-breaker version of ROSS in contrast to that of the old version. The deterministic tie-breaker version excels when faced with many zero-offset events but the old version will give the appearance of stalling as it struggles to make progress. 

The reason for this behavior is that when original ROSS instigates a rollback, it rolls back \emph{all} events with the same timestamp as the target event -- it has no way to make two events with the same timestamp comparable. This means that if a single straggler event arrives -- which it most likely will -- then any progress since the last regular offset event must be rolled back.

Because of this limitation, the only way we were able to show these two ROSS versions in a single plot with this model was to only create zero-offset chains of length two and to instead increase the target remote events to 50\%.

In Figure~\ref{fig:eties}, we can see the result of this behavior. The old ROSS version appears to do well to a point and then as more and more ranks are added, the probability of a rollback increases and eventually it takes considerable amounts of time for the simulation to complete.

\paragraph{\texttt{Event-Ties-Stress} Strong-Scaling}\label{sec:stress-scaling}
This model created zero-offset event trees instead of chains as with the standard \texttt{Event-Ties} model. This creates more zero-offset \lq siblings\rq that where the tie-breaker sequence must be fully utilized to establish a deterministic ordering.

This model was tested with 131,072 LPs each generating a 3-ary zero-offset tree of depth 5 at every integer timestep in the simulation. This created 3.9 billion total events, all with an extreme degree of ties with other events and a target remote percentage of 10\%. The small increase in runtime for the $k=2$ ranks run is because for this particular simulation model and configuration, the benefit of the additional processing power is not yet able to outweigh the cost of rollbacks which are not observed in the sequential execution.

\begin{figure}
  \centering
  \includegraphics[width=\linewidth]{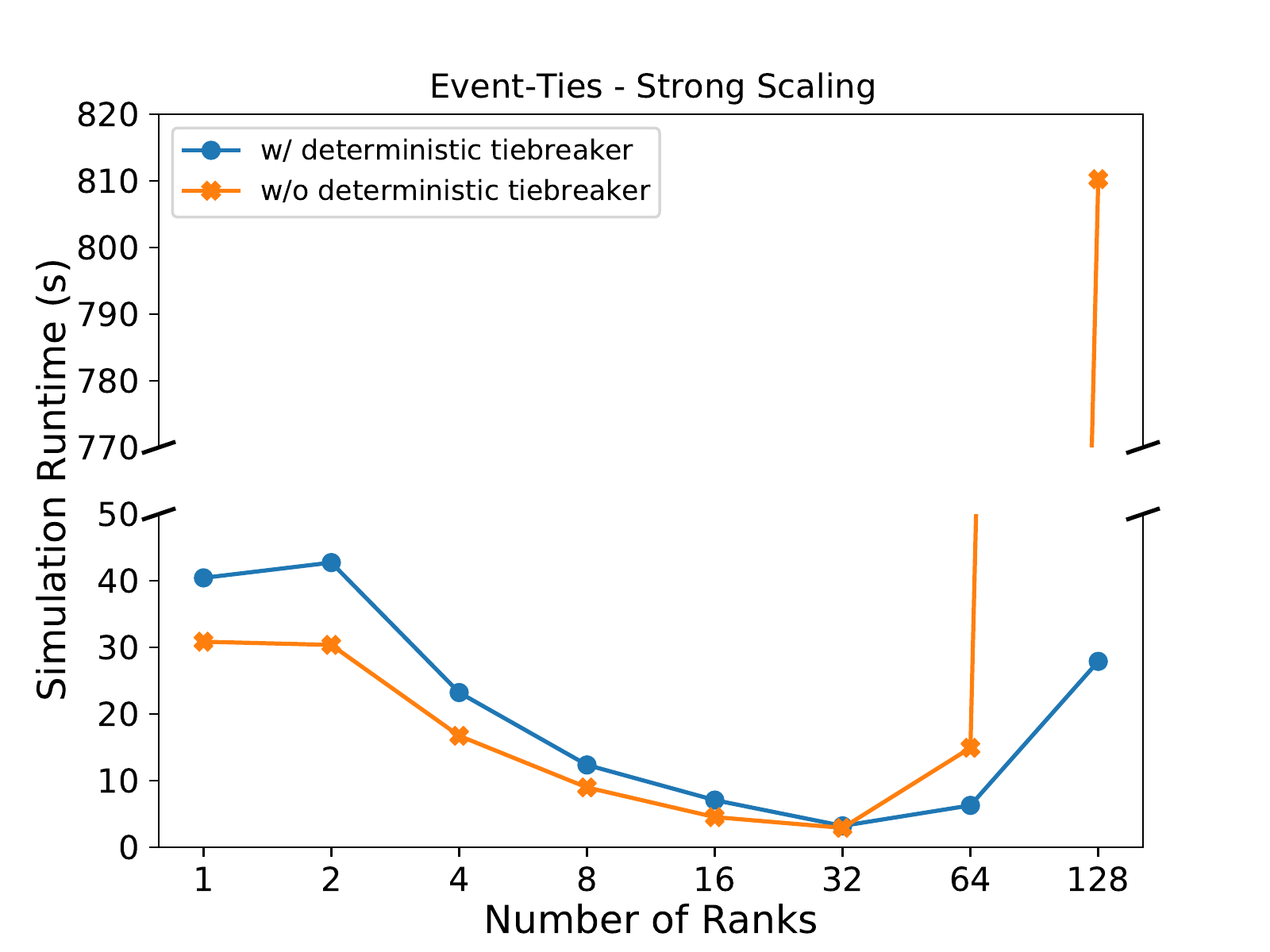}
  \caption{Event-Ties strong scaling study with and without a deterministic tie-breaking mechanism. Note that the y-axis is split to account for the last data point.} \label{fig:eties}
\end{figure}

\subsection{The Case for Determinism}
All simulations in this work utilizing the deterministic tie-breaking feature were verified to exhibit the expected determinism. In contrast, simulations executed without utilizing the tie-breaking mechanism did not consistently generate a reproducible result.

\begin{table}[b]
  \caption{Subset of final results of the sequential $k=1$, optimistic $k=32$ and $k=128$ runs for the \texttt{Event-Ties} model with and without the deterministic tie-breaker (TB)}\label{tab:final-result}
\begin{tabular}{|l|c|c|}
\hline
Simulation & Net Events & LP1 Final Value\\\hline
w/o TB: $k=1$& 39,190,528 & 35.0814\\
w/o TB: $k=32$& 39,190,528 & 34.4111\\
w/o TB: $k=128$& 39,190,528 & 34.4010\\\hline
w/ TB: $k=1$& 39,190,528 & 34.7264 \\
w/ TB: $k=32$& 39,190,528 & 34.7264 \\
w/ TB: $k=128$& 39,190,528 & 34.7264 \\
\hline
\end{tabular}
\end{table}

By looking at the results of the \texttt{PHOLD} scaling, one might argue that the overhead and performance impact is too great to warrant utilizing the feature. \texttt{PHOLD} LP state does not change throughout the simulation and the generation of future events does not depend on the contents of events received. Thus, even though event ties do occasionally occur in the simulated model, their ordering does not affect how many net events there are in the final simulation.

\begin{figure}[t]
  \centering
  \includegraphics[width=\linewidth]{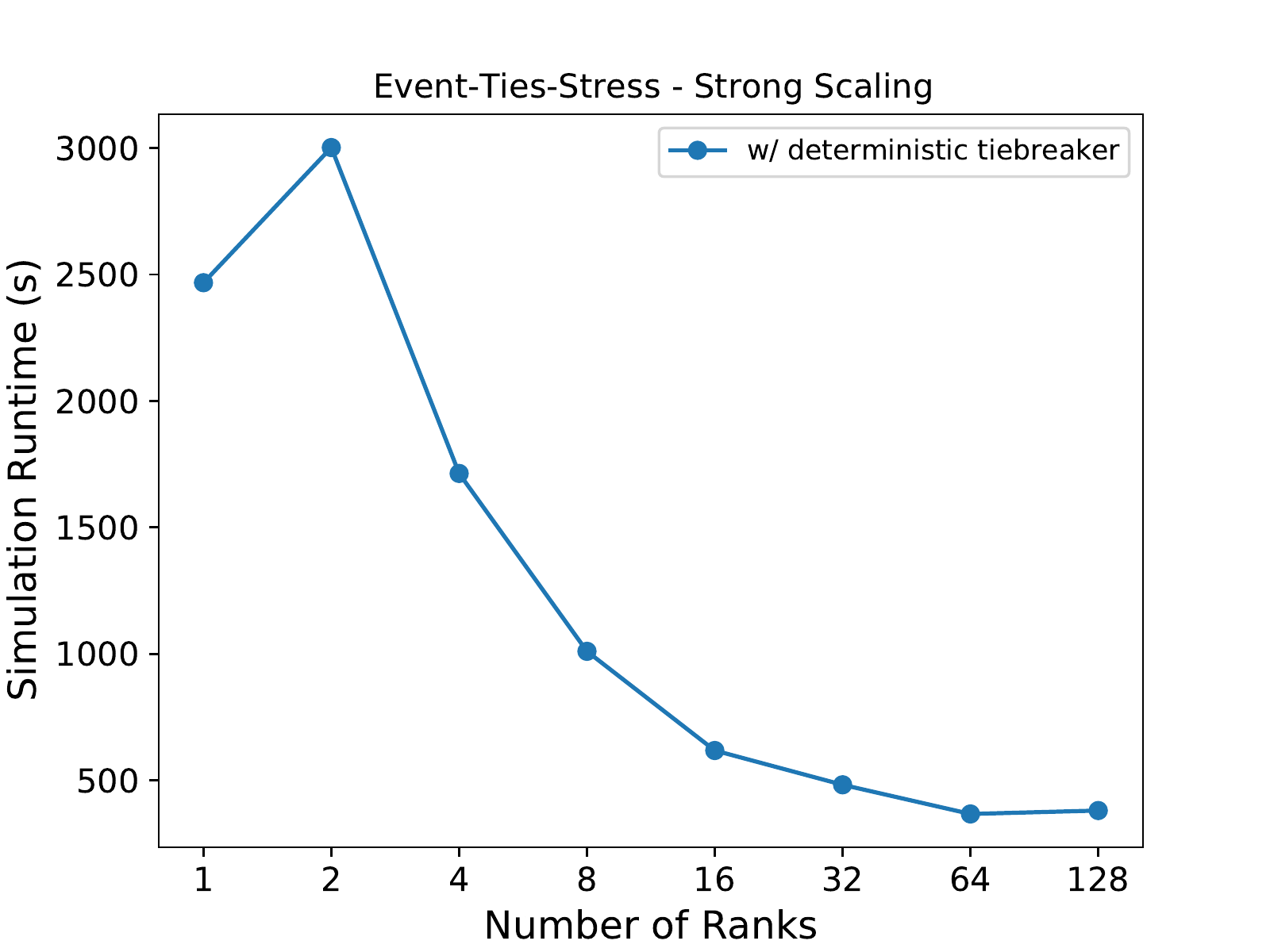}
  \caption{Event-Ties-Stress strong scaling study with deterministic tie-breaking mechanism. The old version of ROSS did not tolerate this model well and could not make noticeable progress in any execution mode but sequential.} \label{fig:eties-stress}
\end{figure}

The same cannot be said for models sensitive to simultaneous event ordering. Table~\ref{tab:final-result} shows extracted results from a few runs from the \texttt{Event-Ties} scaling results featured in Figure~\ref{fig:eties}. We observe small differences in the final results of LP state when there is no arbitration of simultaneous events -- a lack of determinism. With the deterministic tiebreaker functionality enabled, however, determinism is restored and the parallel executions are identical to their sequential counterpart. The difference between the sequential runs with and without the tiebreaker is to be expected and is because the rules dictating order in the tie-breaker simulations also affect the order of events in the sequential version.

With the \texttt{Event-Ties} model -- and \emph{any} model whose generated events depend on LP state at time of processing -- however, event ordering explicitly affects the final result. In the example shown in Table~\ref{tab:final-result} it may appear that the executions without the tie-breaker were \emph{close enough} but this is a product of the simplicity of the model and not indicative of what one could expect with other, more complicated, models. Table~\ref{tab:final-result} also shows that simply looking at the number of net events alone is not sufficient for identifying determinism in a simulated model.

If, instead, we couple the remote destination decision in Procedure~\ref{alg:behavior} to be dependent on its current LP state value, then drastically different results are observed. It makes sense that, should LP state play a role in how events are created (if they are created in the first place), then that model will be particularly sensitive to the ordering of simultaneous events. It is important to keep in mind, then -- given a standard virtual time simulation -- the level of performance observed in a model with simultaneous events may simply be borrowed with determinism offered as collateral. 
% This is especially true should zero-offset events be allowed. 

The goal of parallel discrete event simulation is to recreate a semantically identical sequential simulation but distributed across different processing elements. If determinism in the parallel execution is not assured, then one has, by definition, failed in this objective.

\section{Related Work}
The topic of event simultaneity and ordering in parallel simulations has been discussed in various contexts over the span of decades~\cite{rajaei1993,ronngren1999,kim1997,peschlow2006,peschlow2007tool,peschlow2007,lamport1978,wieland1997}. In~\cite{lamport1978}, the topic of distributed logical and physical clocks, causal ordering conditions and happens-before relations is discussed but notes that the included theory makes no requirement about the ordering of incomparable or simultaneous events. The clocks discussed base their perception of time based on the perceived happens-before relations to other events.

Virtual time~\cite{jefferson1985} takes a slightly different approach by inferring happens-before relations from the encoded timestamps giving a simulated model the power to define when events happen in relation to each other. In~\cite{schordan2020reversible}, the authors also extend ROSS' definition of virtual time to encode lower order bits to break event ties.

Much of this paper was inspired by the work performed in~\cite{wieland1997}. The author makes arguments that given a simulation with $n$ simultaneous events, then the \emph{correct} final result is the mean of all $n!$ possible event orderings. Without some mechanism to effectively explore various possible orderings a correct final result is not achievable. In our work we provide a method that allows for random sampling of the possible event orderings.

That same year, the authors of~\cite{kim1997} proposed an event ordering mechanism for simultaneous events which also employed an extension to the basic timestamp for encoding event priority. This mechanism provided identical results between a sequential and parallel executed simulation and is similar in practice to our definition of biased random causal ordering.

The authors of~\cite{ronngren1999} establish methods for breaking ties based on a logical clock interpretation of event causality or user defined priorities. They also briefly argue how different permutations of otherwise incomparable events should be considered and will inherently affect the outcomes of future events but do not elaborate on a solution to effectively explore this space. Similarly, in~\cite{peschlow2006,peschlow2007tool,peschlow2007} the authors note the importance of discovering other potential outcomes of simultaneous event orderings and propose a PDES branching mechanism tool that facilitates this.

\section{Conclusion}
In summary, we have proposed three solutions to the problem of maintaining a deterministic ordering of events in a parallel conservative or optimistically executed discrete event simulation given the possibility of $n$-way simultaneous events with and without zero-offset events. This solution leverages a deterministic pseudo-random number generator to encode values for use in breaking these $n$-way event ties into a specific, execution-consistent, ordering. This ordering can be unbiased and varied by changing the initial seed of the simulation's random number generators which allows for further statistical analysis of execution results.

We evaluated the performance impact of the proposed solution and observed a marginal impact in a standard PDES benchmarking model. In the near worst-case, stress-test, benchmark models, the performance of the simulation without the tie-breaking functionality was weakened if it was able to make effective progress at all. Furthermore, a simulation without a tie-breaking feature cannot guarantee  that it will deterministically produce an optimistic execution that is semantically identical to the sequential counterpart.

Developers and modelers should always be concerned if a program they wrote generates different results in repeated executions. Did they make a mistake somewhere? Is there a race condition they had not accounted for? We believe that determinism in parallel discrete event simulation is an important tenet and should not be cast aside in the aim of faster performance as any amount of non-determinism is, by definition, unexpected behavior.

\balance

%
% The next two lines define the bibliography style to be used, and the bibliography file.
\bibliographystyle{ACM-Reference-Format}
\bibliography{bibs/tiebreaker}

\end{document}